\newcommand{\captiondetail}{%
  \justifying\small\mbox{}\\[-0.7\baselineskip]\noindent}
\newcommand{\appsub}[3]{\appopt{\subopt{#1}{#2}}{#3}}
\newcommand{\sqappsub}[3]{\sqappopt{\subopt{#1}{#2}}{#3}}
\newcommand{\worst}{\mathrm{worst}}
\newcommand{\new}{\mathrm{new}}
\newcommand{\old}{\mathrm{old}}
\newcommand{\res}{\mathrm{res}}
\newcommand{\wait}{\mathrm{wait}}
\newcommand{\rJw}{r_J}
\newcommand{\aJw}{a_J''}
\newcommand{\rworst}[2][]{\appsub{r^\worst}{#1}{#2}}
\newcommand{\aworst}[2][]{\appsub{a^\worst}{#1}{#2}}
\newcommand{\Xnew}[2][]{\sqappsub{X^\new}{#1}{#2}}
\newcommand{\Xold}[2][]{\sqappsub{X^\old}{#1}{#2}}
\newcommand{\Yold}[2][]{\sqappsub{Y^\old}{#1}{#2}}
\newcommand{\Tres}{T^\res}
\newcommand{\Twait}{T^\wait}
\newcommand{\rhonew}[2][]{\sqappsub{\rho^\new}{#1}{#2}}
\newcommand{\rhoold}[2][]{\sqappsub{\rho^\old}{#1}{#2}}
\newcommand{\Bnew}[2][]{\sqappsub{B^\new}{#1}{#2}}
\newcommand{\tlXnew}[3][]{\appopt{\sqappsub{\tl{X}^\new}{#1}{#2}}{#3}}
\newcommand{\tlXold}[3][]{\appopt{\sqappsub{\tl{X}^\old}{#1}{#2}}{#3}}
\newcommand{\tlYold}[3][]{\appopt{\sqappsub{\tl{Y}^\old}{#1}{#2}}{#3}}
\newcommand{\tlBnew}[3][]{\appopt{\sqappsub{\tl{B}^\new}{#1}{#2}}{#3}}
\newcommand{\tlTres}{\tl{T}^\res}
\newcommand{\tlTwait}{\tl{T}^\wait}
\newcommand{\2}{\mathbf{CoinFlip}}
\newcommand{\dnil}{\emptyset}
\newcommand{\olfloor}[1]{%
  \mkern 4mu\overline{\mkern-4mu\floor{#1}\mkern-2.9mu}\mkern 2.9mu}
\newcommand{\olfloorplus}[3]{%
  \mkern 4mu\overline{\mkern-4mu\floor{#1} + #2\mkern-#3}\mkern #3}
\newcommand{\olceil}[1]{%
  \mkern 4mu\overline{\mkern-4mu\ceil{#1}\mkern-2.9mu}\mkern 2.9mu}
\tikzset{%
  rank plot/.style={x=14, y=4, thick, scale=\iftoggle{widecol}{1.14}{1}},
  axis/.style={thick},
  primary/.style={ultra thick, color=cyan},
  secondary/.style={ultra thick, color=green!41!yellow!78!black},
  cutoff/.style={line width=3.2pt, dotted, color={red!29!yellow!91!black}},
  rworst/.style={line width=3.2pt, dotted, color=magenta},
  rworst point/.style={ultra thick, color=magenta, fill=white},
  aworst/.style={ultra thick, color=green!41!yellow!78!black},
  discarded/.style={color=red!10},
  original/.style={color=green!84!yellow!14},
  recycled/.style={color=green!84!yellow!14},
  guide/.style={thick, densely dotted}}
\def\@above{above}
\def\@below{below}
\def\@negative{-}
\newcommand{\downmode}{%
  \def\@above{below}
  \def\@below{above}
  \def\@negative{}}
\newcommand{\projx}[1]{($ (0,0)!#1!(1,0) $)}
\newcommand{\projxheight}[2]{($ (0,#2)!#1!(1,#2) $)}
\newcommand{\midpt}[2]{($ #1!0.5!#2 $)}
\newcommand{\xguide}[3][]{%
  \draw[axis] (#2, 0) -- (#2, \@negative0.7)
  node[\@below] {\ifempty{#1}{$#2$}{#1}\vphantom{by}};
  \draw[guide] (#2, 0) -- (#2, #3);}
\newcommand{\yguide}[3][]{%
  \draw[axis] (0, #3) -- (-0.2, #3)
  node[left] {\ifempty{#1}{$#3$}{#1}};
  \draw[guide] (0, #3) -- (#2, #3);}
\newcommand{\yguidenolabel}[2]{%
  \draw[guide] (0, #2) -- (#1, #2);}
\newcommand{\xguidept}[2]{%
  \ifempty{#1}{}{%
    \draw[axis] \projx{#2} -- \projxheight{#2}{\@negative0.7}
    node[\@below] {#1\vphantom{by}}};
  \draw[guide] \projx{#2} -- #2;}
\newcommand{\axes}[4]{%
  \draw[axis, ->] (-0.2, 0)
  node[left] {$0$} -- ({#1},0) node[right] {#3};
  \draw[axis, ->] (0, \@negative0.7)
  node[\@below] {$0\vphantom{by}$} -- (0,#2) node[\@above] {#4};}
\newcommand{\point}[2][primary]{%
  \filldraw[#1, very thick] #2 circle (1.8mu);}
\newcommand{\jump}[4][primary]{%
  \point[#1, fill=white]{(#2, #3)}
  \point[#1]{(#2, #4)}}
\newcommand{\snakefirst}[2]{#1 parabola bend #1 \midpt{#1}{#2}}
\newcommand{\snakesecond}[2]{\midpt{#1}{#2} parabola bend #2 #2}
\newcommand{\snake}[2]{\snakefirst{#1}{#2} -- \snakesecond{#1}{#2}}
\newcommand{\interval}[4]{%
  \filldraw[#1] \projx{#2} rectangle #3;
  \begin{scope}
    \node (A) at \projx{#2} {};
    \node (B) at \projx{#3} {};
    \node[\@above] at \midpt{(A)}{(B)} {\footnotesize #4\vphantom{by}};
  \end{scope}}
\newcommand{\discarded}[2]{%
  \interval{discarded}{#1}{#2}{discarded}}
\newcommand{\original}[2]{%
  \interval{original}{#1}{#2}{original}}
\newcommand{\recycled}[3][]{%
  \interval{recycled}{#2}{#3}{\ifempty{#1}{}{$#1$-}recycled}}
\newcommand{\Isnake}{%
  \draw[primary, name path=rank]
  \snake{(0, 3)}{(1.5, 7)}
  -- \snake{(1.5, 7)}{(2, 4)}
  -- \snake{(2, 4)}{(3.25, 12)}
  -- \snake{(3.25, 12)}{(5.25, 6)}
  -- \snake{(5.25, 6)}{(5.75, 5)}
  -- \snake{(5.75, 5)}{(8.25, 9)}
  -- \snake{(8.25, 9)}{(10.5, 3)}
  -- \snake{(10.5, 3)}{(12.5, 9)};
  \draw[primary, dotted] \snakefirst{(12.5, 9)}{(14.9, 7)};}
\newcommand{\Jsnake}{%
  \draw[primary, name path=rank]
  \snake{(0, 6)}{(1.5, 3)}
  -- \snake{(1.5, 3)}{(2.5, 5)}
  -- \snake{(2.5, 5)}{(5.5, 12)}
  -- \snake{(5.5, 12)}{(7, 4)}
  -- \snake{(7, 4)}{(8, 8)}
  -- \snake{(8, 8)}{(9.5, 2)}
  -- \snake{(9.5, 2)}{(11, 8)}
  -- \snake{(11, 8)}{(12.5, 5)};
  \draw[primary, dotted] \snakefirst{(12.5, 5)}{(14.9, 7)};}
\newcommand{\Ksnake}{%
  \draw[primary, name path=rank]
  \snake{(0, 2)}{(2, 7)}
  -- \snake{(2, 7)}{(4, 3)}
  -- \snake{(4, 3)}{(6, 10)}
  -- \snake{(6, 10)}{(7.5, 4)}
  -- \snake{(7.5, 4)}{(9.5, 12)}
  -- \snake{(9.5, 12)}{(11.5, 4)}
  -- \snake{(11.5, 4)}{(12.5, 5)};
  \draw[primary, dotted] \snakefirst{(12.5, 5)}{(14.9, 3)};}
\begin{document}

\setcopyright{acmcopyright}
\acmJournal{POMACS}
\acmYear{2018}
\acmVolume{2}
\acmNumber{1}
\acmArticle{16}
\acmMonth{3}
\acmPrice{15.00}
\acmDOI{10.1145/3179419}

\begin{CCSXML}
<ccs2012>
  <concept>
    <concept_id>10002944.10011123.10011674</concept_id>
    <concept_desc>General and reference~Performance</concept_desc>
    <concept_significance>500</concept_significance>
  </concept>
  <concept>
    <concept_id>10002950.10003648.10003688.10003689</concept_id>
    <concept_desc>Mathematics of computing~Queueing theory</concept_desc>
    <concept_significance>500</concept_significance>
  </concept>
  <concept>
    <concept_id>10011007.10010940.10010941.10010949.10010957.10010688</concept_id>
    <concept_desc>Software and its engineering~Scheduling</concept_desc>
    <concept_significance>500</concept_significance>
  </concept>
  <concept>
    <concept_id>10010147.10010341.10010342</concept_id>
    <concept_desc>Computing methodologies~Model development and analysis</concept_desc>
    <concept_significance>300</concept_significance>
  </concept>
  <concept>
    <concept_id>10003752.10003809.10003636.10003808</concept_id>
    <concept_desc>Theory of computation~Scheduling algorithms</concept_desc>
    <concept_significance>100</concept_significance>
  </concept>
</ccs2012>
\end{CCSXML}

\ccsdesc[500]{General and reference~Performance}
\ccsdesc[500]{Mathematics of computing~Queueing theory}
\ccsdesc[500]{Software and its engineering~Scheduling}
\ccsdesc[300]{Computing methodologies~Model development and analysis}
\ccsdesc[100]{Theory of computation~Scheduling algorithms}

\keywords{%
  M/G/1;
  exact response time analysis;
  Gittins index;
  shortest expected remaining processing time (SERPT)}

\title[SOAP: One Clean Analysis of All Age-Based Scheduling Policies]{%
  SOAP: One Clean Analysis of All
  \iftoggle{widecol}{\\}{}%
  Age-Based Scheduling Policies}

\author{Ziv Scully}
\affiliation{%
  \institution{Carnegie Mellon University}
  \department{Computer Science Department}
  \streetaddress{5000 Forbes Ave}
  \city{Pittsburgh}
  \state{PA}
  \postcode{15213}
  \country{USA}}
\email{zscully@cs.cmu.edu}

\author{Mor Harchol-Balter}
\affiliation{%
  \institution{Carnegie Mellon University}
  \department{Computer Science Department}
  \streetaddress{5000 Forbes Ave}
  \city{Pittsburgh}
  \state{PA}
  \postcode{15213}
  \country{USA}}
\email{harchol@cs.cmu.edu}

\author{Alan Scheller-Wolf}
\affiliation{%
  \institution{Carnegie Mellon University}
  \department{Tepper School of Business}
  \streetaddress{5000 Forbes Ave}
  \city{Pittsburgh}
  \state{PA}
  \postcode{15213}
  \country{USA}}
\email{awolf@andrew.cmu.edu}

\begin{abstract}
  We consider an extremely broad class of M/G/1 scheduling policies
  called SOAP: Schedule Ordered by Age-based Priority.
  The SOAP policies include almost all scheduling policies in the literature
  as well as an infinite number of variants
  which have never been analyzed, or maybe not even conceived.
  SOAP policies range from classic policies, like
  first-come, first-serve~(FCFS), foreground-background~(FB),
  class-based priority,
  and shortest remaining processing time~(SRPT); to much more complicated scheduling rules,
  such as the famously complex Gittins index policy
  and other policies in which a job's priority
  changes arbitrarily with its age.
  While the response time of policies in the former category is well understood,
  policies in the latter category have resisted response time analysis.
  We present a universal analysis of all SOAP policies,
  deriving the mean and Laplace-Stieltjes transform of response time.
\end{abstract}

\maketitle

\section{Introduction}

Analyzing the response time of scheduling policies in the M/G/1 setting
has been the focus of thousands of papers over the past half century,
from classic early works
\citep{book_kleinrock, fb_analysis_schrage, m/g/1_kendall,
  nonpreemptive_takacs, ps_analysis_kleinrock, srpt_analysis_schrage,
  vacations_fuhrmann}
to more recent works in the SIGMETRICS community
\citep{fb_optimality_misra, fairness_wierman, mlps_delay_aalto,
  multiclass_ayesta, ps_asymptotics_borst, ps_heavy_robert, ps_beyond_aalto,
  smart_insensitive_wierman, smart_preventing_wierman, smart_epsilon_wierman,
  srpt_heavy_zwart, vacations_ayesta}.
Examples of common scheduling policies include
\begin{itemize}
\item
  \emph{first-come, first-served}~(FCFS),
  which serves jobs nonpreemptively in the order they arrive;
\item
  \emph{class-based priority},
  which serves the job of highest priority class,
  possibly preemptively and possibly nonpreemptively;
\item
  \emph{shortest remaining processing time}~(SRPT),
  which preemptively serves the job with the least remaining time;
\item
  \emph{foreground-background}~(FB),
  which preemptively serves the job that has received the least service so far;
  and
\item
  \emph{processor sharing}~(PS),
  which concurrently serves all jobs in the system
  at the same rate.
\end{itemize}
In just these few examples
we see a variety of features represented:
preemptible jobs, nonpreemptible jobs,
prioritizing by class, prioritizing by job size,
and prioritizing by service received so far, or \emph{age}.
Each policy requires a custom response time analysis
that takes into account its particular combination of features.

Although there has been much success in analyzing the response time of
specific scheduling policies in the M/G/1 setting,
such as those listed above,
results are ad-hoc and \emph{limited to relatively simple policies}.
Analyzing variants of the above simple policies,
let alone fundamentally different policies,
is an open problem.
For instance, none of the following scenarios have been analyzed before.
\begin{itemize}
\item
  Suppose we have exact size information for some ``sized'' jobs
  but not other ``unsized'' jobs.
  We run SRPT on sized jobs and FB on unsized jobs,
  meaning that we serve the sized job of minimum remaining time
  or unsized job of minimum age,
  whichever measurement is smaller.
\item
  Suppose we have jobs that are neither fully preemptible
  nor fully nonpreemptible
  but instead preemptible only at specific ``checkpoint'' ages.
  We run a preemptive policy, such as SRPT or FB,
  but only preempt jobs when they reach checkpoint ages.
\item
  The \emph{Gittins index policy}
  \citep{book_gittins, m/g/1_gittins_aalto},
  long known to be optimal for minimizing mean response time
  in the M/G/1 queue\footnote{%
    While SRPT is optimal when exact job sizes are known,
    the Gittins index policy, of which SRPT is a special case,
    is optimal even when only size distributions are known.},
  has only been analyzed in certain special cases
  \citep{multiclass_ayesta, rs_slowdown_hyytia}.
  In general, the Gittins index policy can have
  a complex priority scheme \citep{mlps_gittins_aalto}
  which, while known to perform optimally,
  has not been analyzed before in its general form.
\end{itemize}

Approaching the above examples with state-of-the-art techniques,
if possible at all,
would require an ad-hoc analysis for each scenario.
We seek \emph{general principles and techniques} for response time analysis
that apply to not just the above examples
but to as many scheduling policies as possible,
even those not yet imagined.

\subsection{Contributions}

We introduce \emph{SOAP}, a \emph{universal framework}
for defining and analyzing M/G/1 scheduling policies.
The SOAP framework can analyze \emph{any SOAP scheduling policy},
which includes nearly any policy where a job's priority
depends on its own characteristics: class, size, age, and so on.
Specifically, we make the following contributions.
\begin{itemize}
\item
  We \emph{define} the class of SOAP policies (\fref{sec:model}),
  a broad class of policies that includes
  the three unsolved examples above as well as many other policies,
  from practical scenarios to policies not yet imagined.
  We \emph{encode} many policies old and new
  as SOAP policies (\fref{sec:examples}).
\item
  We give a \emph{universal response time analysis}
  that works for any SOAP policy (\fref{sec:response_time}),
  obtaining closed forms for
  the mean (\fref{thm:response_time_mean})
  and Laplace-Stieltjes transform (\fref{thm:response_time_transform}).
  In particular, we apply our results to
  \emph{previously intractable analyses} (\fref{sec:applications}),
  such as the response time of the Gittins index policy.
\end{itemize}

In defining and analyzing SOAP policies,
there are two major technical challenges.
The first major challenge is that
to have a single analysis apply to many scheduling policies at once,
we need to \emph{express all such policies within a single framework}.
The SOAP framework encodes a scheduling policy
as a \emph{rank function},
which maps each job to a priority level, or \emph{rank}.
All SOAP policies are based on a single rule:
always serve the job of \emph{minimal rank}.
For example, in a preemptive class-based priority system,
a job's rank is its class (\fref{ex:soap_classes}),
whereas in SRPT, a job's rank is its remaining time (\fref{ex:soap_srpt}).
Rank functions can express a huge variety of policies,
from virtually all classic policies (\fref{sub:examples_previous})
to complex policies
which have never been analyzed before (\fref{sub:examples_new}).
A notable exception is PS, which does not fit into the SOAP framework.

The second major challenge is to
\emph{analyze policies with arbitrary rank functions}.
In particular, nearly all previously analyzed scheduling policies,
when expressed as SOAP policies,
have rank functions that are \emph{monotonic} in age.
For example, under SRPT, a job's rank decreases with age,
making it less and less likely to be preempted by another job,
while under FB, a job's rank increases with age,
making it more and more likely to be preempted by another job.
Unfortunately, the techniques used in the past
to analyze policies with monotonic rank functions
\emph{break down for arbitrary nonmonotonic rank functions},
which appear, for instance,
when studying the Gittins index policy (\fref{ex:soap_gittins})
and jobs that are preemptible only at certain checkpoints
(\fref{ex:soap_checkpoints}).
We develop \emph{new analytical tools}
that work for arbitrary rank functions (\fref{sec:key_ideas}).

\subsection{Related Work}

Our work on SOAP policies follows in the tradition of
analyses that address an entire class of policies at once.
Two such classes are \emph{SMART} \citep{smart_insensitive_wierman}
and \emph{multilevel processor sharing}~(MLPS) \citep{book_kleinrock}.
\begin{itemize}
\item
  The SMART class includes all policies that satisfy certain criteria
  that ensure they prioritize small jobs over large ones,
  such as SRPT and PSJF (\fref{ex:soap_srpt}).
  Some recent work on SMART policies includes
  analyzing the tail behavior of response time \citep{smart_preventing_wierman}
  and characterizing the tradeoff between
  accuracy of size estimates and response time \citep{smart_epsilon_wierman}.
\item
  The MLPS class consists of policies that
  divide all jobs in the system into echelons based on age,
  then serves jobs in the youngest echelon according to FCFS, FB, or PS.
  Some recent work on MLPS policies includes
  optimally choosing the age echelon cutoffs \citep{mlps_threshold_osipova}
  and connecting MLPS to the Gittins index policy \citep{mlps_gittins_aalto}.
\end{itemize}
While the SMART and MLPS classes have nearly no overlap,
the SOAP class includes many policies from \emph{both} classes.
Specifically, the SMART* subclass of SMART \citep{smart_insensitive_wierman}
and MLPS policies which do not use PS are all SOAP policies.

A particularly important SOAP policy is the Gittins index policy
\citep{book_gittins, m/g/1_gittins_aalto},
which minimizes mean response time in the M/G/1 queue
when job sizes are not known.
The Gittins index policy has a rather complex definition,
but recent work \citep{m/g/1_gittins_aalto, mlps_gittins_aalto}
has revealed some of its structural properties.
\Citet{multiclass_ayesta} analyze a specific case of the Gittins index policy
for a multiclass M/G/1 queue where each class's job size distribution
has the \emph{decreasing hazard rate}~(DHR) property.
Using the SOAP framework, we can analyze the Gittins index policy
for \emph{arbitrary size distributions}.
\Citet{rs_slowdown_hyytia} show that for jobs with known sizes,
a weighted version of the Gittins index yields the
\emph{shortest processing time product}~(SPTP) policy
and that this policy minimizes mean \emph{slowdown},
which is the ratio of a job's response time to its size.
SPTP is a SOAP policy, so the SOAP framework can obtain
the Laplace-Stieltjes transform of slowdown for SPTP,
extending the previous mean analysis.

\section{System Model and SOAP Policies}
\label{sec:model}

We consider work-conserving scheduling policies for the M/G/1 queue.
We write $\lambda$ for the total arrival rate
and $X$ for the overall job size distribution.
We assume a stable system, meaning $\lambda\E{X} < 1$,
and a preempt-resume model,
meaning preemption and processor sharing are permitted
without penalty or loss of work.

\subsection{Descriptors}
\label{sub:descriptors}

Scheduling algorithms use information about jobs in the system
when deciding which job to serve.
We can divide this information into two types:
\emph{static} and \emph{dynamic}.
\begin{itemize}
\item
  \emph{Static} information about a job
  is revealed when it enters the system and never changes.
  For example, in a system with multiple job classes,
  a job's class would be static information,
  and in a system where exact job sizes are known,
  a job's exact size would be static information.
  We call a job's static information its \emph{descriptor}
  and write~$\mc{D}$ for the set of descriptors.
  A job's descriptor~$d$ determines its size distribution
  $X_d = (X \mid \text{job has descriptor } d)$.
\item
  \emph{Dynamic} information about a job changes as a job is served.
  In this paper,
  the only dynamic information about a job is its \emph{age},
  the amount of time it has been served.
  The set of possible ages is~$\R_{\geq 0}$.
\end{itemize}

Descriptors are often tuples.
To distinguish descriptors from ranks,
another type of tuple introduced in \fref{sub:soap_policies},
we write descriptors in [square brackets]
and ranks in $\langle$angle brackets$\rangle$.

\begin{example}
  \label{ex:descriptors}
  Consider a system with a set of job classes~$\mc{K}$,
  where $X_k$ is the size distribution of class $k \in \mc{K}$.
  Depending on what information is known to the scheduler,
  the set of descriptors~$\mc{D}$ may be one of several options.
  \begin{itemize}
  \item
    If jobs do not reveal their exact size upon entering the system,
    then $\mc{D} = \mc{K}$,
    because the only static information we have about each job is its class.
    The size distribution of jobs with descriptor~$k$ is simply~$X_k$.
  \item
    If jobs reveal their exact size upon entering the system,
    then $\mc{D} = \mc{K} \times \R_{\geq 0}$,
    because we know each job's class and size.
    The size distribution of jobs with descriptor $[k, x]$
    is $X_{[k, x]} = x$,
    the deterministic distribution with value~$x$.
  \item
    If only some jobs reveal their exact size, then
    \begin{equation*}
      \mc{D} = \mc{K} \times (\R_{\geq 0} \cup \{?\}),
    \end{equation*}
    because some jobs have known exact size $x \in \R_{\geq 0}$
    while others have unknown size, which we denote by~$?$.
    The size distributions are
    $X_{[k, x]} = x$ for $x \in \R_{\geq 0}$
    and $X_{[k, ?]} = X_k$.
  \end{itemize}
\end{example}

We require that the descriptors of jobs
must be chosen i.i.d. according to a fixed distribution.
For instance, in \fref{ex:descriptors},
each job's class must be chosen i.i.d.,
and in the third scenario,
having each job independently reveal its size
with probability~$1/2$ is permitted,
but having alternating arrivals reveal their sizes is not.

\subsection{SOAP Policies and Rank Functions}
\label{sub:soap_policies}

A \emph{SOAP scheduling policy} is a preemptive priority policy
where a job's descriptor and age determine its priority.
SOAP is an acronym for \emph{Schedule Ordered by Age-based Priority}.
Specifically, a SOAP policy is specified by the following ingredients:
\begin{itemize}
\item
  a set~$\mc{R}$ of \emph{ranks},
\item
  a strict total order $\prec$ on $\mc{R}$, and
\item
  a \emph{rank function} assigning a rank $r(d, a)$
  to each pair of descriptor~$d$ and age~$a$,
  \begin{equation*}
    r : \mc{D} \times \R_{\geq 0} \to \mc{R}.
  \end{equation*}
\end{itemize}
The defining property of SOAP policies is the following.
\begin{quote}
  \textbf{Every moment in time,
    a SOAP policy serves the job of \emph{minimum rank}.}
\end{quote}
Ties between jobs of the same rank
are broken using a tiebreaking rule.
For simplicity of exposition,
we focus on \emph{first-come, first-served}~(FCFS) tiebreaking,
which, among the jobs of minimal rank,
serves the job that arrived to the system first.
Our results also apply to SOAP policies that use
\emph{last-come, first-served}~(LCFS) tiebreaking,
which we defer to \fref{app:lcfs}\footnote{%
  Sometimes ties for minimum rank lead to processor sharing,
  and these ties do not require tiebreaking.
  See \fref{ex:soap_fb} and \fref{app:rank_function} for details.}.

Suppose job~$J$ has descriptor~$d_J$ and age~$a_J$
and job~$K$ has descriptor~$d_K$ and age~$a_K$.
We say $J$ \emph{outranks} $K$
if $r(d_J, a_J) \prec r(d_K, a_K)$
or both $r(d_J, a_J) = r(d_K, a_K)$ and $J$ arrived before~$K$.
SOAP policies always serve the job that outranks all other jobs in the system.

A great many SOAP policies can be expressed using $\mc{R} = \R$,
in which case $\prec$ is the usual ordering~$<$ on $\R$.
However, it is often convenient to have a nested rank structure
in which jobs are first prioritized by a \emph{primary rank},
then by a \emph{secondary rank},
and only then by the tiebreaking rule.
We express such nested ranks using $\mc{R} = \R^2$.
Each rank is a pair $\angle{r_1, r_2}$
of primary rank~$r_1$ and secondary rank~$r_2$,
and $\prec$ is the lexicographic ordering:
$\angle{r_1, r_2} \prec \angle{r_1', r_2'}$
if $r_1 < r_1'$ or both $r_1 = r_1'$ and $r_2 < r_2'$.
We write primary and secondary ranks of a state $(d, a)$
as $r_1(d, a)$ and $r_2(d, a)$, respectively, so
\begin{equation*}
  r(d, a) = \angle{r_1(d, a), r_2(d, a)}.
\end{equation*}

When specifying a SOAP policy,
we usually leave the choice of $\mc{R}$ unstated,
as it is implied from the formula for the rank function.
Our results apply to $\mc{R} = \R^n$ ordered lexicographically
for any $n \geq 1$,
and they easily generalize to other choices of~$\mc{R}$.

We will devote much time to discussing how jobs' ranks change with age.
Thus, when we call a rank function ``monotonic'' or similar,
we mean that it is so with respect to age, not descriptor.

For a SOAP policy to be well-defined,
its rank function must satisfy some technical conditions,
which are given in \fref{app:rank_function}.

\section{SOAP Policies Are Everywhere}
\label{sec:examples}

\subsection{Previously Analyzed SOAP Policies}
\label{sub:examples_previous}

\begin{example}
  \label{ex:soap_fb}
  The \emph{foreground-background}~(FB) policy is a SOAP policy.
  It uses no static information, so $\mc{D} = \{\dnil\}$,
  where $\dnil$ is a ``placeholder'' descriptor assigned to every job.
  FB always serves the job of least age,
  so it has rank function $r(\dnil, a) = a$.
  It is likely that many jobs are tied for minimum rank under FB,
  but whichever job is served immediately loses minimum status,
  resulting in a processor-sharing effect.
\end{example}

There are always many rank functions that encode the same SOAP policy.
For instance, any rank function monotonically increasing in age,
such as $r(\dnil, a) = a^2$, also describes FB.

\begin{example}
  \label{ex:soap_fcfs}
  The \emph{first-come, first-served}~(FCFS) policy is a SOAP policy.
  It uses no static information, so $\mc{D} = \{\dnil\}$.
  FCFS is nonpreemptive,
  which is equivalent to always serving the job of maximal age,
  so it has rank function $r(\dnil, a) = -a$.
  FCFS tiebreaking plays a crucial role by breaking ties between
  jobs of age~$0$.
\end{example}

Once again, there are multiple rank functions that describe FCFS.
In particular, a constant rank function yields FCFS
due to the tiebreaking rule,
but we prefer the given encoding because it makes it clear that
FCFS is a \emph{nonpreemptive} policy.
As the following examples demonstrate,
using primary rank~$-a$ is a general way to indicate nonpreemptiveness
in a rank function.

\begin{example}
  \label{ex:soap_classes}
  Consider a system with classes $\mc{K} = \{1, \dots, n\}$
  where jobs within each class are served in FCFS order
  but class~$1$ has highest priority,
  class~$2$ has next-highest priority, and so on.
  The \emph{nonpreemptive priority}
  and \emph{preemptive priority} policies are SOAP policies.
  Both policies use job class as static information, so $\mc{D} = \mc{K}$.
  \begin{itemize}
  \item
    Nonpreemptive priority has rank function
    $r(k, a) = \angle{-a, k}$:
    the primary rank prevents preemption,
    and the secondary rank prioritizes the classes
    when starting a new job.
  \item
    Preemptive priority has rank function
    $r(k, a) = \angle{k, -a}$:
    because~$k$ is the primary rank,
    jobs from high-priority classes preempt those in low priority classes.
  \end{itemize}
\end{example}

\begin{example}
  \label{ex:soap_srpt}
  The \emph{shortest job first}~(SJF),
  \emph{preemptive shortest job first}~(PSJF),
  and \emph{shortest remaining processing time}~(SRPT) policies
  are SOAP policies.
  All three policies assume exact size information is known
  and use it when scheduling,
  so all use $\mc{D} = \R_{\geq 0}$.
  \begin{itemize}
  \item
    SJF has rank function $r(x, a) = \angle{-a, x}$:
    it is a \emph{nonpreemptive} priority policy with size as priority.
  \item
    PSJF has rank function $r(x, a) = \angle{x, -a}$:
    it is a \emph{preemptive} priority policy with size as priority.
  \item
    SRPT has rank function $r(x, a) = x - a$:
    a job's rank is its remaining size.
  \end{itemize}
\end{example}

\subsection{Newly Analyzed SOAP Policies}
\label{sub:examples_new}

\begin{example}
  \label{ex:soap_serpt}
  The \emph{shortest expected processing time}~(SEPT),
  \emph{preemptive shortest expected processing time}~(PSEPT)
  and \emph{shortest expected remaining processing time}~(SERPT) policies
  are SOAP policies.
  The policies are respective analogues of SJF, PSJF, and SRPT,
  but they do not have access to exact size information.
  \begin{itemize}
  \item
    SEPT has rank function $r(d, a) = \angle{-a, \E{X_d}}$:
    it is a \emph{nonpreemptive} priority policy
    with expected size as priority.
  \item
    PSEPT has rank function $r(d, a) = \angle{\E{X_d}, -a}$:
    it is a \emph{preemptive} priority policy with expected size as priority.
  \item
    SERPT has rank function $r(d, a) = \E[X_d > a]{X_d - a}$:
    a job's rank is its expected remaining size.
  \end{itemize}
  While SEPT and PSEPT have analyses similar to those of SJF and PSJF,
  respectively,
  SERPT has never been analyzed before in full generality.
  We have left the set of descriptors~$\mc{D}$ unspecified
  because the definitions above work for any set of descriptors.

  For concreteness, consider a system where
  all jobs have the same descriptor $\dnil$
  and size either $2$ or $14$, each with probability $1/2$.
  The resulting rank function for SERPT, shown in \fref{fig:rank_serpt},
  is \emph{nonmonotonic} with respect to age.
  This differs from
  \emph{every policy described in \fref{sub:examples_previous}},
  all of which have monotonic rank functions.
  The potential nonmonotonicity of SERPT's rank function
  has prevented previous techniques from analyzing SERPT in full generality.
  We give the first response time analysis of SERPT
  using our general analysis of all SOAP policies
  (\fref{sub:gittins_serpt}).

  \begin{figure}
    \centering
    \begin{tikzpicture}[rank plot]
  \xguide{2}{12}
  \xguide{14}{0}
  \yguide{2}{6}
  \yguide{0}{8}
  \yguide{2}{12}

  \axes{14.8}{14}{$a$}{$r(\emptyset, a)$}

  \draw[primary] (0, 8) -- (2, 6);
  \draw[primary] (2, 12) -- (14, 0);

  \jump{2}{6}{12}
\end{tikzpicture}

    \captiondetail
    The rank function for SERPT
    using the distribution described in \fref{ex:soap_serpt}:
    jobs have size either~$2$ or~$14$, each with probability~$1/2$.
    The rank is the expected remaining size of a job
    given it has reached its age~$a$.
    In this case, the initial expected size is $8$,
    but if the job does not finish at age~$2$,
    then we know it must be size~$14$,
    so its expected remaining size jumps up to $12$.

    \caption{Rank Function for SERPT (\Fref{ex:soap_serpt})}
    \label{fig:rank_serpt}
  \end{figure}
\end{example}

\begin{example}
  \label{ex:soap_gittins}
  The \emph{Gittins index} of a job with descriptor~$d$ and age~$a$ is
  \citep{book_gittins, m/g/1_gittins_aalto}
  \begin{equation*}
    G(d, a)
    = \sup_{\Delta > 0} \frac{%
        \P[X_d > a]{X_d - a \leq \Delta}}{%
        \E[X_d > a]{\min\{X_d - a, \Delta\}}}
    = \sup_{\Delta > 0} \frac{%
        \int_a^{a + \Delta} \density[d]{t} \, dt}{%
        \int_a^{a + \Delta} \tail[d]{t} \, dt},
  \end{equation*}
  where $\density[d]{}$ and $\tail[d]{}$
  are the density and tail functions of~$X_d$, respectively.
  The \emph{Gittins index policy} is the scheduling policy
  that always serves the job of maximal Gittins index,
  and it is known to minimize mean response time
  in the M/G/1 queue~\citep{book_gittins}.
  Although optimality of the Gittins index policy has long been known,
  only a few special cases have been analyzed in the past
  \citep{multiclass_ayesta, rs_slowdown_hyytia}.
  The Gittins index policy is a SOAP policy with rank function
  \begin{equation*}
    r(d, a) = \frac{1}{G(d, a)}.
  \end{equation*}
  Like the policies in \fref{ex:soap_serpt},
  the Gittins index policy can be defined with any set of descriptors.

  The Gittins index policy is in general not the same as SERPT,
  as shown in \fref{fig:rank_gittins},
  but, like SERPT,
  the Gittins index policy often uses a nonmonotonic rank function,
  making it impossible to analyze in general using previous techniques.
  We give the first response time analysis of the Gittins index policy
  using our general analysis of all SOAP policies (\fref{sub:gittins_serpt}).

  \begin{figure}
    \centering
    \begin{tikzpicture}[rank plot]
  \xguide{2}{12}
  \xguide{14}{0}
  \yguide{0}{4}
  \yguide{2}{12}

  \axes{14.8}{14}{$a$}{$r(\emptyset, a)$}

  \draw[primary] (0,4) -- (2, 0);
  \draw[primary] (2,12) -- (14, 0);

  \jump{2}{0}{12}
\end{tikzpicture}

    \captiondetail
    The rank function for the Gittins Index Policy
    using the same distribution as in \fref{fig:rank_serpt}:
    jobs have size either~$2$ or~$14$, each with probability~$1/2$.
    Compared to SERPT,
    the Gittins index policy gives more priority to jobs
    before they reach age~$2$.
    For instance,
    while SERPT ranks a job with age~$1.99$ on par with
    a hypothetical job that deterministically has remaining size~$6.01$,
    the Gittins index policy ranks such a job on par with
    a hypothetical job that deterministically has remaining size~$0.02$.
    This reflects the fact that
    it is almost free to run such a job to age~$2$,
    just in case it is about to finish.
    We show in \fref{sub:gittins_serpt}
    that the Gittins index policy achieves lower mean response time than SERPT
    due to its prioritizing potentially short jobs.

    \caption{Rank Function for Gittins Index (\Fref{ex:soap_gittins})}
    \label{fig:rank_gittins}
  \end{figure}
\end{example}

\begin{example}
  \label{ex:soap_checkpoints}
  Consider a system in which jobs,
  rather than being completely nonpreemptible or preemptible,
  are \emph{preemptible at specific checkpoints},
  say every $1$~time unit.
  The \emph{discretized~FB} policy is a variant of FB
  for jobs with checkpoints:
  when possible, it serves the job of minimal age,
  but it does not preempt jobs between checkpoints\footnote{%
    We note that it is possible to model discretized~FB
    as an MLPS policy with infinitely many thresholds.}.
  Discretized~FB is a SOAP policy.
  It uses no static information, so $\mc{D} = \{\dnil\}$,
  and it has rank function
  \begin{equation*}
    r(\dnil, a) = \angle{\floor{a} - a, a},
  \end{equation*}
  This rank function is illustrated in \fref{fig:rank_checkpoints}.
  Roughly speaking, the primary rank encodes the ``discretized'' aspect,
  preempting a job only at integer ages~$a$ when $\floor{a} - a = 0$,
  and the secondary rank encodes the ``FB'' aspect.

  Discretized~FB is just one example of a policy
  for jobs preemptible only at specific checkpoints,
  but we can ``discretize'' any other SOAP policy
  by using primary rank $\floor{a} - a$.
  For instance, \emph{discretized SRPT} has rank function
  $r(x, a) = \angle{\floor{a} - a, x - a}$.

  \begin{figure}
    \centering
    \begin{tikzpicture}[rank plot]
  \begin{scope}
    \downmode

    \xguide[$1$]{2}{-6}
    \xguide[$2$]{4}{-6}
    \xguide[$3$]{6}{-6}
    \xguide[$4$]{8}{-6}
    \xguide[$5$]{10}{-6}
    \xguide[$6$]{12}{-6}

    \yguide[$-1$]{14}{-6}

    \axes{14.8}{-7}{$a$}{$r_1(\emptyset, a)$}

    \draw[primary] (0, 0) -- (2, -6);
    \draw[primary] (2, 0) -- (4, -6);
    \draw[primary] (4, 0) -- (6, -6);
    \draw[primary] (6, 0) -- (8, -6);
    \draw[primary] (8, 0) -- (10, -6);
    \draw[primary] (10, 0) -- (12, -6);
    \draw[primary] (12, 0) -- (12.5, -6*0.5/2);
    \draw[primary, dotted] (12.5, -6*0.5/2) -- (13.7, -6*1.7/2);

    \jump{2}{-6}{0}
    \jump{4}{-6}{0}
    \jump{6}{-6}{0}
    \jump{8}{-6}{0}
    \jump{10}{-6}{0}
    \jump{12}{-6}{0}
  \end{scope}

  \begin{scope}[shift={(0, -30)}]
    \xguide[$b$]{7}{6}
    \yguide[$b$]{7}{6}

    \axes{14.8}{14}{$a$}{$r_2(\emptyset, a)$}

    \draw[secondary] (0,0) -- (12.5, 12*12.5/14);
    \draw[secondary, dotted] (12.5, 12*12.5/14) -- (13.7, 12*13.7/14);
  \end{scope}
\end{tikzpicture}

    \captiondetail
    The rank function for the discretized~FB policy described in
    \fref{ex:soap_checkpoints}.
    The primary rank~(top)
    ensures that jobs are only preempted at integer ages.
    This is because new jobs enter the system with primary rank~$0$,
    and jobs only have primary rank~$0$ at integer ages.
    The secondary rank~(bottom)
    prioritizes the job of lowest age, as in traditional~FB.

    \caption{Rank Function for Discretized~FB (\Fref{ex:soap_checkpoints})}
    \label{fig:rank_checkpoints}
  \end{figure}
\end{example}

We have seen a variety of features that SOAP policies can model:
\begin{itemize}
\item
  jobs that are nonpreemptible, preemptible,
  or preemptible at checkpoints;
\item
  jobs with known or unknown exact size;
\item
  priority based on a job's exact size or expected size;
\item
  class-based priority in multiclass systems; and
\item
  priority that changes nonmonotonically as a job ages.
\end{itemize}
As the following examples show, SOAP policies go even further:
they can \emph{combine many such features} as part of a single policy.

\begin{example}
  \label{ex:soap_humans_and_robots}
  Consider a system with two customer classes, humans~($H$) and robots~($R$).
  \begin{itemize}
  \item
    Humans, unpredictable and easily offended,
    have unknown service time, are nonpreemptible,
    and are served according to FCFS relative to other humans.
  \item
    Robots, precise and ruthlessly efficient,
    have known service time, are preemptible,
    and are served according to SRPT relative to other robots.
  \end{itemize}
  We can model the system using
  $\mc{D} = \{[H, ?]\} \cup \{[R, x] \mid x \in \R_{\geq 0}\}$,
  where $?$ denotes unknown size.
  A reasonable policy might have humans outrank most robots
  but let short robots,
  say those with remaining size less than some threshold $x_H$,
  outrank humans that have not yet started service.
  This results in rank function
  \begin{align*}
    r([H, ?], a) &= \angle{-a, x_H} \\
    r([R, x], a) &= \angle{0, x - a},
  \end{align*}
  which uses primary rank to encode preemptibility
  and secondary rank to encode priority.
  We analyze this system in \fref{sub:humans_and_robots}.
\end{example}

\begin{example}
  Consider a system as in \fref{ex:soap_classes} with $n = 3$ classes.
  Suppose that, in addition to class-based priority,
  \begin{itemize}
  \item
    jobs in class~$1$ are preemptible,
    have known size, and are served according to SRPT;
  \item
    jobs in class~$2$ are nonpreemptible,
    have known size, and are served according to SJF; and
  \item
    jobs in class~$3$ are preemptible at specific checkpoints,
    have unknown size,
    and are served according to discretized~FB,
    as in \fref{ex:soap_checkpoints}.
  \end{itemize}
  The static information of a job is its class and, if known,
  its size, so $\mc{D} = \{1, 2\} \times \R_{\geq 0} \cup \{[3, ?]\}$,
  where $?$ denotes unknown size.
  The rank function, which uses $\mc{R} = \R^3$ as the set of ranks, is
  \begin{align*}
    r([1, x], a) &= \angle{0, 1, x - a} \\
    r([2, x], a) &= \angle{-a, 2, x} \\
    r([3, ?], a) &= \angle{\floor{a} - a, 3, a}.
  \end{align*}
  The components of the rank respectively encode
  preemptibility, class-based priority, and the policy used within each class.
\end{example}

\subsection{SOAP Policies with LCFS Tiebreaking}

There are two common SOAP policies that
use LCFS tiebreaking instead of FCFS.
The \emph{last-come, first-serve}~(LCFS) policy,
which has the same rank function as FCFS in \fref{ex:soap_fcfs}
but uses LCFS tiebreaking.
The rank function still ensures nonpreemption,
but now ties between jobs of age~$0$ are broken by LCFS.
Similarly, the \emph{preemptive last-come, first-serve}~(PLCFS) policy
is a SOAP policy with constant rank function and LCFS tiebreaking.
SOAP policies that use LCFS tiebreaking admit essentially the same analysis
as those that use FCFS tiebreaking (\fref{app:lcfs}).

\subsection{Non-SOAP Policies}

As our examples have demonstrated,
there is an extremely wide variety of SOAP policies.
However, there are some policies which are not SOAP policies,
many of which fit into three broad categories.

First, some policies \emph{cannot be expressed using descriptors
  that are distributed i.i.d. for each arriving job}.
For example,
the \emph{earliest deadline first}~(EDF) policy could be a SOAP policy
if each job's descriptor were its deadline,
but deadlines cannot be i.i.d. because later arrivals need later deadlines.

Second, some policies
\emph{require a tiebreaking rule other than FCFS or LCFS}.
For example,
the \emph{random order of service}~(ROS) policy
could be a SOAP policy using the rank function in \fref{ex:soap_fcfs}
if it could break ties between jobs of age~$0$ differently.
A future generalization of the SOAP class might allow for ROS tiebreaking
because, like FCFS and LCFS, it serves one job at a time.
In contrast, the \emph{processor sharing}~(PS) policy,
which is also not a SOAP policy,
requires a fundamentally different tiebreaking rule.

Third, some policies have \emph{job priorities that are context-dependent}.
For example, a nonpreemptive policy in a multiclass system that
tries to alternate between serving jobs of class~$1$ and class~$2$
is not a SOAP policy,
because the priority of a job depends on external context,
namely the class of the previously served job.
The rank function approach used by SOAP policies
inherently considers each job individually,
so there is no way to capture such context.

\section{How to Handle Any Rank Function}
\label{sec:key_ideas}

We have seen how to express a vast space of policies in the SOAP framework
by careful choice of rank function.
However, as demonstrated by \fref{sub:examples_new},
the rank function that encodes a SOAP policy can be very complicated.
This leaves us with a difficult technical challenge:
how do we analyze SOAP policies with arbitrary rank functions?

Before tackling arbitrary rank functions,
let us recall how classic response time analyses work.
Though there are of course many approaches,
all of the policies in \fref{sub:examples_previous} can be analyzed with the
``tagged job'' approach,
which follows a particular job through the system
to analyze its response time.
For instance, consider tagging a job of size~$x$
in a system using PSJF (\fref{ex:soap_srpt}).
There are two types of jobs that outrank the tagged job:
\begin{itemize}
\item
  jobs of size at most~$x$
  that are present in the system when the tagged job arrives and
\item
  jobs of size less than~$x$ that arrive at the system
  after the tagged job arrives but before it completes.
\end{itemize}

One way to think about PSJF is to view the tagged job as
seeing the system through ``transformer glasses'' \citep{book_harchol-balter}
which transform the system by
\emph{hiding jobs that the tagged job outranks}.
For PSJF, this transformation is simple because
each job's rank is essentially constant\footnote{%
  Due to FCFS tiebreaking,
  we could encode PSJF without its secondary rank.}.
A similar approach still works
for policies with increasing or decreasing rank functions,
but the hiding transformation becomes more complicated.
For instance, under SRPT (\fref{ex:soap_srpt}),
which has a decreasing rank function,
a tagged job of size~$x$ sees a system transformed as follows.
\begin{itemize}
\item
  Jobs that arrive after the tagged job are hidden
  if their size is at least $x - a$,
  where $a$ is the tagged job's age when the other job arrives.
\item
  Jobs that arrive before the tagged job
  are hidden if their \emph{remaining} size is greater than~$x$.
  It may be that a job of \emph{initial} size greater than~$x$
  remains visible.
\end{itemize}
In more general terms:
because jobs' ranks change with age,
the tagged job's hiding criterion changes with its age,
and whether or not other jobs satisfy that criterion changes with their ages.
Handling these changes in rank is already tricky for SRPT,
where a job's rank only decreases with age.
The situation becomes even more complex
when working with \emph{nonmonotonic} rank functions.

\subsection{Conventions}

For the remainder of this section,
we examine the journey of a tagged job~$J$ through the system.
The tagged job~$J$ has descriptor~$d_J$ and size~$x_J$.
Note that we may use $J$'s size as part of our analysis
even if the scheduler does not have access to exact job sizes.

Throughout,
we call jobs other than~$J$ \emph{old} if they arrive before $J$
and \emph{new} if they arrive after~$J$.
As a mnemonic,
we name old jobs~$I$ and name new jobs~$K$,
using subscripts when there are multiple such jobs.
In examples, old jobs have descriptor~$d_I$ and new jobs have descriptor~$d_K$,
though in a real system it may of course happen that
different old or new jobs have different descriptors.
When it is unspecified whether a job is new or old, we name the job~$L$.

\subsection{Nonmonotonicity Difficulties}
\label{sub:nonmonotonicity}

There are two major obstacles to analyzing policies with
arbitrary nonmonotonic rank functions
that do not occur when analyzing SRPT,
which has a monotonic rank function.
We illustrate these obstacles below and in \fref{fig:rank_nonmonotonic}.
The first obstacle concerns the nonmonotonicity of $J$'s rank.
\begin{itemize}
\item
  In SRPT, we \emph{permanently} hide some other jobs
  based on $J$'s current rank.
\item
  In general, another job~$L$ might be only \emph{temporarily} hidden.
  If $J$'s rank starts below $L$'s rank but later exceeds it,
  the initially hidden~$L$ becomes visible again.
\end{itemize}
The second obstacle concerns the nonmonotonicity of the ranks of old jobs.
\begin{itemize}
\item
  In SRPT, an old job \emph{permanently} outranks $J$
  if served for long enough before $J$ arrives.
\item
  In general, an old job~$I$ might only \emph{temporarily} outrank~$J$.
  Furthermore, if $I$'s rank oscillates above and below $J$'s initial rank,
  whether $I$ gets hidden or stays visible depends on
  \emph{when during $I$'s service $J$ arrives}.
\end{itemize}

\begin{figure}
  \centering
  \begin{tikzpicture}[rank plot]
  \begin{scope}
    \xguide[$a_I$]{4.25}{9}
    \yguide[$r_J$]{0}{7}
    \yguidenolabel{4.25}{9}
    \yguide[$r_J'$]{0}{11}

    \axes{14.8}{14}{$a$}{$r(d_I, a)$}

    \draw[cutoff] (0, 7) -- (14, 7);
    \draw[cutoff] (0, 11) -- (14, 11);
    \Isnake

    \draw[rworst, ultra thick, solid, ->] (5.75, 7.6) -- (5.75, 10.4);
  \end{scope}

  \begin{scope}[shift={(0, -23)}]
    \xguide[$a_I$]{4.25}{9}
    \xguide[$a_I'$]{5.75}{5}
    \xguide[$a_I''$]{8.25}{9}
    \yguide[$r_J$]{0}{7}
    \yguidenolabel{5.75}{5}
    \yguide[$r_J$]{0}{7}
    \yguidenolabel{8.25}{9}

    \axes{14.8}{14}{$a$}{$r(d_I, a)$}

    \draw[rworst, ultra thick, solid, ->] (4.35, 2.5) -- (5.65, 2.5);
    \draw[rworst, ultra thick, solid, ->] (5.85, 2.5) -- (8.15, 2.5);

    \draw[cutoff] (0, 7) -- (14, 7);
    \Isnake
  \end{scope}
\end{tikzpicture}

  \captiondetail
  We show two obstacles to the ``transformer glasses'' approach
  by examining the interaction between
  the tagged job~$J$ and an old job~$I$ of age~$a_I$.
  The cyan curve shows $I$'s rank as a function of its age.
  First (top), suppose a tagged job~$J$
  has rank~$r_J$ upon entering the system,
  and suppose that old job~$I$ of age~$a_I$ is already in the system.
  Consider how $J$ views $I$ if, as pictured, $r_J \prec r(d_I, a_I)$.
  Then $J$ outranks~$I$, so $I$ is initially hidden.
  However, $I$ may be only \emph{temporarily} hidden,
  because $J$'s rank may later increase to $r_J' \succ r(d_I, a_I)$.
  Second (bottom), consider the same jobs~$J$ and $I$
  and suppose $J$ has not entered the system yet.
  Whether or not $I$ will be hidden from $J$
  depends on \emph{$I$'s age when $J$ arrives}.
  For instance, as $I$ advances in age from $a_I$ to $a_I'$ and later~$a_I''$,
  it switches back and forth between being visible to and hidden from $J$.

  \caption{Difficulties with Nonmonotonic Rank Functions}
  \label{fig:rank_nonmonotonic}
\end{figure}

Dealing with such arbitrarily varying rank functions appears intractable.
We need \emph{two key insights} in order to handle nonmonotonic rank functions:
the \emph{Pessimism Principle} (\fref{sub:pessimism_principle})
and the \emph{Vacation Transformation} (\fref{sub:vacation_transformation}).

\subsection{The Pessimism Principle}
\label{sub:pessimism_principle}

We call the amount of time any job~$L$ is served before $J$ completes
\emph{$J$'s delay due to~$L\esub$}.
The response time of $J$ is its size~$x_J$
plus its delays due to all jobs that are in the system with $J$ at some point.

Suppose that a new job~$K$ arrives when $J$ has age~$a_J$.
To analyze $J$'s response time, we have to know its delay due to~$K$.
As we saw in \fref{sub:nonmonotonicity},
deciding whether or not to hide $K$
based only on $J$'s current rank $r(d_J, a_J)$ will not work.
Instead, we need to examine $J$'s \emph{current and future ranks}.

\begin{definition}
  \label{def:rworst}
  The \emph{worst future rank} of a job
  with descriptor~$d$, size~$x$, and age~$a$ is
  \begin{equation*}
    \rworst[d, x]{a} = \sup_{\mathclap{a \leq b < x}} r(d, b).
  \end{equation*}
  Note that the worst future rank only considers ages up to the job's size~$x$.
  See \fref{fig:rworst} for an illustration.

  \begin{figure}
    \centering
    \begin{tikzpicture}[rank plot]
  \xguide[$x$]{10.25}{5}

  \axes{14.8}{14}{$a$}{$r(d, a)$}

  \Jsnake

  \draw[rworst]
  (0, 12) -- (5.5, 12)
  -- \snakefirst{(5.5, 12)}{(7, 4)}
  -- \snakefirst{(8, 8)}{(9.5, 2)}
  -- (10.25, 5);
  \point[rworst point]{(10.25, 5)}
  \node[above] at (3.5, 12) {$\rworst[d, x]{a}$};
\end{tikzpicture}

    \captiondetail
    The relationship between rank $r(d, a)$ (solid cyan),
    and worst future rank $\rworst[d, x]{a}$ (dashed magenta)
    for a job with descriptor~$d$ and size~$x$.

    \caption{Illustration of Worst Future Rank (\Fref{def:rworst})}
    \label{fig:rworst}
  \end{figure}
\end{definition}

The reason we care about $J$'s worst future rank is that
for the purposes of computing $J$'s delay delay due to~$K$,
we can essentially pretend that $J$ has its worst future rank
\begin{equation*}
  \rJw = \rworst[d_J, x_J]{a_J},
\end{equation*}
as illustrated in \fref{fig:peak}.
This is because before $J$ reaches its worst future rank,
$K$ must either complete or surpass\footnote{%
  Unless otherwise noted, we mean ``surpass'' in a weak sense.
  That is, to surpass a rank~$r$ means to attain rank at least~$r$.}
rank~$\rJw$.

\begin{figure}
  \centering
  \begin{tikzpicture}[rank plot]
  \xguide[$a_J$]{1.5}{3}
  \xguide[$a_J'$]{4}{8.5}
  \xguide[$a_J''$]{5.5}{12}
  \yguide[$r_K$]{0}{8.5}
  \yguide[$\rJw$]{5.5}{12}

  \axes{14.8}{14}{$a$}{$r(d_J, a)$}

  \draw[cutoff, name path=cutoff] (0, 8.5) -- (14, 8.5);
  \Jsnake
\end{tikzpicture}

  \captiondetail
  A new job~$K$ arrives when $J$,
  whose rank as a function of age is shown in cyan,
  has age~$a_J$.
  Suppose for simplicity that $K$'s rank is constant at~$r_K$.
  An instant after $J$ reaches age~$a_J'$, it will be outranked by $K$,
  so $K$ will complete before~$J$ reaches age~$\aJw$.
  In practice, $K$'s rank may change,
  but $K$ will still complete before~$J$ reaches age~$\aJw$
  unless $K$ surpasses rank~$\rJw$,
  in which case $K$ never outranks $J$ again.
  Thus, for the purposes of finding $J$'s delay due to $K$,
  we can pretend that $J$'s rank is $\rJw$
  for all of $J$'s ages before~$\aJw$.

  \caption{A Tagged Job's Delay due to a New Job}
  \label{fig:peak}
\end{figure}

We have so far focused on a new job~$K$,
but the story is very similar for old jobs.
The result is the Pessimism Principle,
so named for its pessimistic focus on the worst future rank.

\begin{namedobservation*}{Pessimism Principle}
  The tagged job~$J$'s delay due to any other job~$L$
  is the amount of time $L$ is served until it either
  \emph{completes} or \emph{surpasses $J$'s worst future rank}.
  To be precise, letting
  \begin{equation*}
    \rJw(a) = \rworst[d_J, x_J]{a},
  \end{equation*}
  this means the following.
  \begin{itemize}
  \item
    Each old job~$I$ is served until
    it completes or first has rank $r_I \succ \rJw(0)$.
    In particular, if $r_I = \rJw(0)$,
    then $I$ outranks~$J$ due to FCFS tiebreaking,
    thus the strict inequality\footnote{%
      See \fref{app:rworst} for discussion of corner cases
      where we need $\succeq$ instead of $\succ$.}.
  \item
    Each new job~$K$ is served until
    it completes or first has rank $r_K \succeq \rJw(a_J)$,
    where $a_J$ is the age of $J$ when $K$ arrives.
  \end{itemize}
  Furthermore, in both cases, the service occurs
  \emph{before $J$ is served while at its worst future rank}.

  To clarify, the discussion above addresses the
  \emph{total amount of service another job receives}
  while $J$ is in the system.
  The service need not be contiguous
  but might be interleaved with that of $J$ and other jobs.
\end{namedobservation*}

The Pessimism Principle gives an implicit description of
$J$'s delay due to any other job~$L$,
but it remains to explicitly find this delay's distribution.
We do this now for the case where $L$ is new,
treating the case where $L$ is old in \fref{sub:vacation_transformation}.

\begin{definition}
  \label{def:new_work}
  Let $r$ be a rank.
  The \emph{new $r$-work} is a random variable, written $\Xnew{r}$,
  representing how long a job that just arrived to the system
  is served until it completes or surpasses rank~$r$.
  Specifically, we define $\Xnew{r} = \Xnew[D]{r}$,
  where $D$ is the random descriptor assigned to a new job and,
  for any specific descriptor~$d$,
  \begin{align*}
    c_d[r] &= \inf\{a \geq 0 \mid r(d, a) \succeq r\} \\
    \Xnew[d]{r} &= \min\{X_d, c_d[r]\}.
  \end{align*}
  That is, $c_d[r]$ is the \emph{cutoff age} at which
  a new job with descriptor~$d$ surpasses rank~$r$.
  See \fref{fig:new_work} for an illustration.

  \begin{figure}
    \centering
    \begin{tikzpicture}[rank plot]
  \interval{original}{(0, 10)}{(6, 10)}{new}

  \xguide[${c_d[r]}$]{6}{10}
  \yguide[$r$]{0}{10}

  \axes{14.8}{14}{$a$}{$r(d, a)$}

  \draw[cutoff] (0, 10) -- (14, 10);
  \Ksnake
\end{tikzpicture}

    \captiondetail
    The new $r$-work for a job with descriptor~$d$,
    whose rank as a function of age is shown in cyan,
    is the amount of service the job requires
    until it either completes or surpasses rank~$r$
    at the cutoff age~$c_d[r]$.
    Pictorially, the new $r$-work is the amount of service the job requires
    while its age is in the green region.
    The new $r$-work is at most~$c_d[r]$ but may be less if the job completes.
    Note that the new $r$-work is not impacted by the job's rank
    at ages $a > c_d[r]$,
    even if $r(d, a) \prec r$.

    \caption{Illustration of New Work (\Fref{def:new_work})}
    \label{fig:new_work}
  \end{figure}
\end{definition}

Together, the Pessimism Principle and \fref{def:new_work} say that
if a new job~$K$ has random descriptor and arrives when $J$ has age~$a_J$,
then $J$'s delay due to $K$ is $\Xnew{\rworst[d_J, x_J]{a_J}}$.

\begin{example}
  \label{ex:new_work_srpt}
  Consider SRPT as described in \fref{ex:soap_srpt},
  in which a job's descriptor is its size and its rank is its remaining size.
  Suppose that a new job~$K$ of size~$x_K$ arrives when $J$ has age~$a_J$.
  Under SRPT, $J$'s worst future rank is its current rank
  $r(x_J, a_J) = x_J - a_J$,
  so the cutoff age for~$K$ is
  \begin{equation*}
    c_{x_K}[x_J - a_J] =
    \begin{cases}
      \infty & \text{if } x_K < x_J - a_J \\
      0 & \text{if } x_K \geq x_J - a_J.
    \end{cases}
  \end{equation*}
  That is, $K$ will always outrank $J$ if $x_K < x_J - a_J$,
  and $K$ will never outrank $J$ if $x_K \geq x_J - a_J$.
  This means $J$'s delay due to $K$ is
  \begin{equation*}
    \Xnew[x_K]{x_J - a_J} = x_K \1(x_K < x_J - a_J),
  \end{equation*}
  where $\1$ is the indicator function.

  When analyzing the response time of SRPT,
  we need to know $J$'s delay due to a \emph{random} new job,
  meaning one with size drawn from the size distribution~$X$.
  This is the new $(x_J - a_J)$-work,
  \begin{equation*}
    \Xnew{x_J - a_J} = \Xnew[X]{x_J - a_J} = X \1(X < x_J - a_J).
  \end{equation*}
\end{example}

\begin{example}
  \label{ex:new_work_serpt}
  Consider the SERPT system described in \fref{ex:soap_serpt},
  in which \emph{all} jobs have the same descriptor~$\dnil$
  and the same two-point size distribution:
  jobs are size~$2$ with probability~$1/2$ and size~$14$ otherwise.
  The rank function is
  \begin{equation*}
    r(\dnil, a) = \E[X > a]{X - a} =
    \begin{cases}
      8 - a & \text{if } a < 2 \\
      14 - a & \text{if } a \geq 2,
    \end{cases}
  \end{equation*}
  as shown in \fref{fig:rank_serpt}.

  Suppose that a new job~$K$ arrives when $J$ has age $a_J < 2$.
  The worst future rank of~$J$ depends crucially on $J$'s size~$x_J$.
  \begin{itemize}
  \item
    If $x_J = 2$, then $J$'s worst future rank is
    its current rank, $8 - a_J$.
    In this case, the cutoff age for $K$ is $c_\dnil[8 - a_J] = 0$
    because $K$'s initial rank is $8$,
    which is at least $J$'s worst future rank $8 - a_J$.
    This means $J$'s delay due to $K$ is
    \begin{equation*}
      \Xnew{8 - a_J} = \Xnew[\dnil]{8 - a_J} = 0.
    \end{equation*}
  \item
    If instead $x_J = 14$, then $J$'s worst future rank is $12$.
    In this case, the cutoff age for $K$ is $c_\dnil[12] = 2$
    because $K$ will either complete or jump up to rank~$12$
    when it reaches age~$2$.
    This means $J$'s delay due to $K$ is
    \begin{equation*}
      \Xnew{12} = \Xnew[\dnil]{12} = 2.
    \end{equation*}
  \end{itemize}
\end{example}

\subsection{The Vacation Transformation}
\label{sub:vacation_transformation}

We have seen how the Pessimism Principle shows us
how long each new job delays the tagged job~$J$.
The question remains: how long does each old job delay~$J$?
This is much harder than the corresponding question for new jobs
because \emph{old jobs can have any age},
whereas new jobs always start at age~$0$.

Fortunately, we are actually not directly concerned with
the delay due to individual old jobs.
What ultimately matters is the delay due to \emph{all old jobs together}.
It turns out that we can view this total delay as
the \emph{queueing time of a carefully transformed system}.
The careful transformation in question is the Vacation Transformation,
but we are still a few definitions away from presenting it.

Our goal is to find the $J$'s total delay due to old jobs.
We can think of this delay as the amount of ``relevant'' work in the system
at the moment $J$ arrives.
Because Poisson arrivals see time averages \citep{pasta_wolff},
the distribution of the amount relevant work seen by $J$ upon arrival
is the \emph{stationary distribution} of the amount of relevant work.
Thus, in this section,
we imagine $J$ as a \emph{witness} to the system,
watching other jobs enter, receive service, and exit.
For this purpose, the most important fact about $J$ is
its worst future rank upon arrival,
\begin{equation*}
  \rJw = \rworst[d_J, x_J]{0}.
\end{equation*}

So far, we have considered old jobs as a monolithic category,
but it is useful to consider three subcategories.
At any moment in time, we can classify old jobs as follows.
\begin{itemize}
\item
  \emph{Discarded} old jobs currently have rank greater than $\rJw$.
\item
  \emph{Original} old jobs currently have rank at most $\rJw$
  and \emph{have always had} rank at most $\rJw$ since arriving themselves.
\item
  \emph{Recycled} old jobs currently have rank at most $\rJw$
  but had rank greater than $\rJw$ at some point in the past.
\end{itemize}
More generally,
we may call a job discarded, original, or recycled
\emph{with respect to rank~$r\esub$},
in which case we replace $\rJw$ with~$r$.

An old job~$I$ goes through the following transitions between these categories,
as shown in Figures~\ref{fig:transformer_glasses} and~\ref{fig:old_work}.
\begin{itemize}
\item
  When $I$ arrives in the system,
  if its initial rank is at most $\rJw$,
  it starts out original.
  Otherwise, it starts out discarded.
\item
  As $I$ ages, it may become discarded if it is not already.
\item
  As $I$ ages further, it may become recycled,
  then discarded again, then recycled again, and so on until it completes.
\end{itemize}

Eventually, $J$ will arrive,
and each of the old jobs will delay $J$ by some amount of time
based on their category at the moment when $J$ arrives.
By the Pessimism Principle, $J$'s delay due to discarded jobs is~$0$,
so such jobs do not concern us further.
Original jobs are similar to new jobs
but, due to the FCFS tiebreaking rule, not quite the same.
Recycled jobs are the most difficult type of old job to handle,
but fortunately, as we will soon see,
$J$ sees \emph{at most one} recycled job in the system when it arrives.

For the purposes of analyzing $J$'s response time,
we view the system through ``transformer glasses'' \citep{book_harchol-balter}
through which \emph{only original and recycled jobs are visible},
as illustrated in \fref{fig:transformer_glasses}.
In the transformed system,
jobs are transformed such that when they become discarded, they complete.
Thus, the total work in the transformed system is exactly
the ``relevant'' work in the untransformed system,
which would be $J$'s total delay due to old jobs
were $J$ to arrive immediately.
Therefore, our goal is to find the
\emph{stationary distribution of work in the transformed system}.

\begin{figure}
  \centering
  \begin{tikzpicture}[rank plot]
  \coordinate (a_1) at (1.5, 0);
  \coordinate (a_2) at (5, 0);
  \coordinate (a_3) at (8, 0);

  \xguidept{$a_1$}{(a_1)}
  \xguidept{$a_2$}{(a_2)}
  \xguidept{$a_3$}{(a_3)}
  \yguide[$\rJw$]{0}{7}

  \axes{14.8}{14}{$a$}{$r(d_I, a)$}

  \draw[cutoff, name path=cutoff] (0, 7) -- (14, 7);
  \Isnake
  \path[name intersections={of=rank and cutoff}];

  % Not sure why \projx doesn't work.
  \coordinate (c_1) at ($ (0,0)!(intersection-2)!(1,0) $);
  \coordinate (c_2) at ($ (0,0)!(intersection-4)!(1,0) $);
  \draw[rworst, solid] (a_1) -- (c_1);
  \draw[rworst, solid] (a_2) -- (c_2);

  \begin{scope}[on background layer]
    \original{(0, 7)}{(intersection-2)}
    \discarded{(intersection-2)}{(intersection-3)}
    \recycled{(intersection-3)}{(intersection-4)}
    \discarded{(intersection-4)}{(intersection-5)}
    \recycled{(intersection-5)}{(intersection-6)}
    \discarded{(intersection-6)}{(14, 7)}

    \xguidept{}{(intersection-2)}
    \xguidept{}{(intersection-3)}
    \xguidept{}{(intersection-4)}
    \xguidept{}{(intersection-5)}
    \xguidept{}{(intersection-6)}
  \end{scope}

  \begin{scope}[shift={(0.5, -10.75)}, y=14]
    \fill[original] (4.5, -1.25) rectangle (6, 1.25);
    \fill[recycled] (6, -1.25) rectangle (7.5, 1.25);
    \fill[discarded] (7.5, -1.25) rectangle (9, 1.25);

    \draw (3.5, -1.25) -- (9, -1.25) -- (9, 1.25) -- (3.5, 1.25);
    \draw (7.5, -1.25) -- ++(0, 2.5);
    \draw (6, -1.25) -- ++(0, 2.5);
    \draw (4.5, -1.25) -- ++(0, 2.5);

    \draw (9.9, 0) circle (0.9);
    \draw[->] (10.8, 0) -- (11.25, 0);
    \filldraw (3.75, 0) circle (0.5mu);
    \filldraw (3.5, 0) circle (0.5mu);
    \filldraw (3.25, 0) circle (0.5mu);

    \draw[rworst, solid] let \p1=(a_1), \p2=(c_1) in
    (5.55, -1.05) -- ++(0, \x2-\x1);
    \draw[rworst, solid] let \p1=(a_2), \p2=(c_2) in
    (7.05, -1.05) -- ++(0, \x2-\x1);
    \draw (4.7, -1.05) -- ++(1.1, 0);
    \draw (6.2, -1.05) -- ++(1.1, 0);
    \draw (7.7, -1.05) -- ++(1.1, 0);
    \node at (5.05, -0.65) {$I_1$};
    \node at (6.55, -0.65) {$I_2$};
    \node at (8.05, -0.65) {$I_3$};

    \draw[->] (1.75, 0) node[left] {$J$} -- ++(0.76, 0);
  \end{scope}
\end{tikzpicture}
\vspace{0.5\baselineskip}

  \captiondetail
  A system contains jobs $I_1$, $I_2$, and~$I_3$
  with respective ages $a_1$, $a_2$, and~$a_3$.
  All three jobs have the same descriptor~$d_I$
  and therefore the same rank as a function of age, drawn in cyan.
  We view the system from the transformed perspective of witness~$J$,
  which has worst future rank~$\rJw$.
  By the Pessimism Principle,
  if $J$ were to arrive at the system now,
  it would only be delayed by original job~$I_1$ and recycled job~$I_2$,
  so discarded job~$I_3$ is completely hidden from~$J$.
  Furthermore, jobs $I_1$ and $I_2$ only delay $J$
  until they either complete or exceed rank~$\rJw$,
  so there are upper bounds on the delays due to each,
  shown as magenta bars.

  \caption{A Witness's Transformed View of the System}
  \label{fig:transformer_glasses}
\end{figure}

Both original and recycled jobs arrive in the transformed system.
Arrivals of original jobs correspond to arrivals to the untransformed system,
but arrivals of recycled jobs occur seemingly arbitrarily,
as they are really caused by
discarded jobs transitioning to recycled in the untransformed system.
A busy period in the transformed system
always starts with the arrival of an original or recycled job.
Arrivals of original jobs continue during the busy period,
but \emph{no more recycled jobs arrive}
for the rest of the busy period.
This is because for a recycled job to arrive in the transformed system,
a discarded job has to become recycled
by receiving service in the untransformed system.
But discarded jobs never outrank original or recycled jobs,
which are present for the entire busy period in the transformed system,
so such transitions never occur.

To analyze the amount of work in the transformed system,
we need to know how long each old job spends as an original or recycled job.
We call the amount of time an old job spends as
original with respect to rank~$r$
its \emph{$0$-old $r$-work},
and we call the amount of time it spends as
recycled for the $i$th time with respect to rank~$r$
its \emph{$i$-old $r$-work},
both of which we now define formally.

\begin{definition}
  \label{def:old_work}
  Let $r$ be a rank and $d$ be a descriptor.
  The \emph{$0$-old $r$-interval} for descriptor~$d$
  is the interval of ages during which a job of descriptor~$d$ is
  original with respect to rank~$r$.
  Specifically, the interval is $[b_{0, d}[r], c_{0, d}[r]]$, where\footnote{%
    See \fref{app:rworst} for discussion of corner cases
    where we need $\succeq$ instead of $\succ$.}
  \begin{align*}
    b_{0, d}[r] &= 0 \\
    c_{0, d}[r] &= \inf\{a \geq b_{0, d}[r] \mid r(d, a) \succ r\}.
  \end{align*}
  For $i \geq 1$,
  the \emph{$i$-old $r$-interval} for descriptor~$d$
  is the interval of ages during which the job is
  recycled with respect to rank~$r$ for the $i$th time.
  Specifically, the interval is $[b_{i, d}[r], c_{i, d}[r]]$, where
  \begin{align*}
    b_{i, d}[r] &= \inf\{a > c_{i - 1, d}[r] \mid r(d, a) \preceq r\} \\
    c_{i, d}[r] &= \inf\{a > b_{i, d}[r] \mid r(d, a) \succ r\}.
  \end{align*}
  If $b_{i, d}[r] = c_{i, d}[r] = \infty$, the interval is the empty set.
  See \fref{fig:old_work} for an illustration.

  For $i \geq 0$, the \emph{$i$-old $r$-work} is a random variable,
  written $\Xold[i]{r}$,
  representing how long a job will be served
  while its age is in its $i$-old $r$-interval.
  Specifically, we define $\Xold[i]{r} = \Xold[i, D]{r}$,
  where $D$ is the random descriptor assigned to a new job and,
  for any specific descriptor~$d$,
  \begin{equation*}
    \Xold[i, d]{r} =
    \begin{cases}
      0 & \text{if } X_d < b_{i, d}[r] \\
      X_d - b_{i, d}[r] & \text{if } b_{i, d}[r] \leq X_d < c_{i, d}[r] \\
      c_{i, d}[r] - b_{i, d}[r] & \text{if } c_{i, d}[r] \leq X_d.
    \end{cases}
  \end{equation*}
  If $b_{i, d}[r] = c_{i, d}[r] = \infty$, we define $\Xold[i, d]{r} = 0$.

  \begin{figure}
    \centering
    \begin{tikzpicture}[rank plot]
  \yguide[$r$]{0}{7}

  \axes{14.8}{14}{$a$}{$r(d, a)$}

  \draw[cutoff, name path=cutoff] (0, 7) -- (14, 7);
  \Isnake
  \path[name intersections={of=rank and cutoff}];

  \begin{scope}[on background layer]
    \original{(0, 7)}{(intersection-2)}
    \discarded{(intersection-2)}{(intersection-3)}
    \recycled[1]{(intersection-3)}{(intersection-4)}
    \discarded{(intersection-4)}{(intersection-5)}
    \recycled[2]{(intersection-5)}{(intersection-6)}
    \discarded{(intersection-6)}{(14, 7)}

    \xguidept{$c_{0, d}[r]$}{(intersection-2)}
    \xguidept{$b_{1, d}[r]$}{(intersection-3)}
    \xguidept{$c_{1, d}[r]$}{(intersection-4)}
    \xguidept{$b_{2, d}[r]$}{(intersection-5)}
    \xguidept{$c_{2, d}[r]$}{(intersection-6)}
  \end{scope}
\end{tikzpicture}

    \captiondetail
    As a job ages, it can transition repeatedly
    between being discarded and recycled with respect to a rank~$r$.
    The $i$-old $r$-interval for descriptor~$d$
    is the interval $[b_{i, d}[r], c_{i, d}[r]]$ during which
    a job of descriptor~$d$ is original ($i = 0$)
    or recycled for the $i$th time ($i \geq 1$).
    We highlight the $i$-old $r$-intervals in green.
    The $i$-old $r$-work is the amount of service the job requires
    while its age is in its $i$-old $r$-interval.

    \caption{Illustration of Old Work (\Fref{def:old_work})}
    \label{fig:old_work}
  \end{figure}
\end{definition}

Suppose old job~$I$ has a random descriptor.
In the transformed system, $I$ receives service
\begin{itemize}
\item
  for time $\Xold[0]{\rJw}$ as an original job and,
\item
  for all $i \geq 1$,
  for time $\Xold[i]{\rJw}$ as a job being recycled for the $i$th time.
\end{itemize}
Note that $\Xold[0]{\rJw}$ may be~$0$,
representing $I$ starting out discarded,
and $\Xold[i]{\rJw}$ for $i \geq 1$ may be~$0$,
representing $I$ completing before being recycled for the $i$th time.

\begin{example}
  \label{ex:old_work_srpt}
  Consider SRPT as described in \fref{ex:soap_srpt},
  in which a job's descriptor is its size and its rank is its remaining size.
  Suppose that $J$ witnesses an old job~$I$ of initial size~$x_I$.
  Under SRPT, every job's rank is strictly decreasing with age,
  so $J$'s worst future rank is its initial size~$x_J$.
  The amount of time $I$ spends as an original or recycled job
  depends on its size relative to~$J$'s.
  \begin{itemize}
  \item
    If $x_I \leq x_J$, then $I$ is original until its completion
    because its rank never exceeds~$x_J$, so
    \begin{equation*}
      \Xold[0, x_I]{x_J} = x_I.
    \end{equation*}
    $I$ is never recycled, so $\Xold[i]{x_J} = 0$ for $i \geq 1$.
  \item
    If $x_I > x_J$, then $I$ is starts out discarded
    but becomes recycled at age $x_I - x_J$,
    at which point it has remaining size~$x_J$, so
    \begin{align*}
      \Xold[0, x_I]{x_J} &= 0 \\
      \Xold[1, x_I]{x_J} &= x_J.
    \end{align*}
    $I$ is recycled only once, so $\Xold[i, x_I]{x_J} = 0$ for $i \geq 2$.
  \end{itemize}

  When analyzing the response time of SRPT,
  we need to know the amount of time a \emph{random} old job,
  meaning one with size drawn from the size distribution~$X$,
  spends as an original or recycled job.
  From the above casework, we obtain
  \begin{align*}
    \Xold[0]{x_J} = \Xold[0, X]{x_J} &= X\1(X \leq x_J) \\
    \Xold[1]{x_J} = \Xold[1, X]{x_J} &= x_J\1(X > x_J),
  \end{align*}
  where $\1$ is the indicator function,
  and $\Xold[i]{x_J} = 0$ for $i \geq 2$.
\end{example}

The Vacation Transformation gives a simple specification of
the amount of work the witness~$J$ sees in the transformed system.
It follows from three observations.
\begin{itemize}
\item
  The amount of work in the transformed system
  is independent of the scheduling policy used on transformed jobs,
  provided it is work-conserving,
  so we may assume FCFS among original and recycled jobs\footnote{%
    Recall that an original or recycled job completes the transformed system
    if it either completes \emph{or becomes discarded}
    in the untransformed sysytem.}.
\item
  Because Poisson arrivals see time averages \citep{pasta_wolff},
  the stationary amount of work in the transformed system
  is the same as the stationary FCFS queueing time of an original job
  in the transformed system.
\item
  In the transformed system,
  the arrival process of recycled jobs is not Poisson,
  but they only appear at the starts of busy periods,
  so it is convenient to view them as server vacations.
\end{itemize}

\begin{namedobservation*}{Vacation Transformation}
  Consider the tagged job~$J$ with descriptor~$d_J$ and size~$x_J$
  arriving to the system,
  and let
  \begin{equation*}
    \rJw = \rworst[d_J, x_J]{0}.
  \end{equation*}
  $J$'s total delay due to old jobs
  has the same distribution as
  \emph{queueing time in a transformed M/G/1/FCFS system
    with ``sparse'' server vacations}.
  We call the transformed system the \emph{Vacation Transformation System},
  or simply \emph{VT~System}.
  In the VT~System,
  jobs arrive at rate~$\lambda$ and have size distribution $\Xold[0]{\rJw}$,
  and several types of vacations occasionally occur, as described below.

  The server in the VT~System is always in one of three states:
  \begin{itemize}
  \item
    \emph{busy}, meaning a job is in service;
  \item
    \emph{idle}, meaning the system is empty
    but the server is ready to start a job immediately should one arrive; or
  \item
    \emph{on vacation}, meaning the server will not serve jobs
    until the vacation finishes,
    even if there are jobs in the system.
  \end{itemize}
  The server only starts vacations when the system is empty.
  Unlike job arrivals, vacation start times are not a Poisson process.
  Each vacation has a \emph{type} $i \geq 1$ determining its length.
  Specifically, type~$i$ vacations have i.i.d. lengths drawn from distribution
  \begin{equation*}
    V_i = (\Xold[i]{\rJw} \mid \Xold[i]{\rJw} > 0).
  \end{equation*}
  The stationary probability that the the server is on a type~$i$ vacation
  is $\lambda\E{\Xold[i]{\rJw}}$.
  The exact process determining when a type~$i$ vacation starts is intractable,
  but to analyze queueing time in the VT~System,
  as we do in \fref{lem:waiting_time},
  it fortunately suffices to know just this stationary probability.
\end{namedobservation*}

\section{Response Time of SOAP Policies}
\label{sec:response_time}

Having spent the previous section understanding
the perspective of a tagged job in a system using a SOAP policy,
we are now ready to apply our insights
to analyze the response time of SOAP policies.
Specifically, we analyze~$T_{d, x}$,
the response time of a tagged job~$J$ with descriptor~$d$ and size~$x$.

When analyzing traditional policies like SRPT or PSJF (\fref{ex:soap_srpt}),
it often helps to think of $J$'s response time as the sum of
two independent random variables
\citep{srpt_analysis_schrage, book_harchol-balter}:
\begin{itemize}
\item
  \emph{waiting time}, written $\Twait_{d, x}$,
  the time from when $J$ arrives to when it first enters service; and
\item
  \emph{residence time}, written $\Tres_{d, x}$,
  the time from when $J$ first enters service to when it exits.
\end{itemize}
Unfortunately, the usual strategy for analyzing waiting and residence time
relies on $J$'s rank never increasing,
which holds for SRPT and PSJF but does not hold for a great many SOAP policies.
To overcome this obstacle,
we \emph{replace $J$'s rank with its worst future rank},
which never increases.
This rank substitution, fully justified in \fref{app:rank_substitution},
is made possible by the Pessimism Principle (\fref{sub:pessimism_principle}).
We call $J$ with its adjusted rank the \emph{rank-substituted} tagged job.

After replacing $J$'s rank with its worst future rank,
it remains to analyze its waiting and residence times.
Even though $J$'s worst future rank is monotonic,
other jobs' ranks may both increase and decrease with age,
making these analyses challenging.
\begin{itemize}
\item
  We can think of \emph{waiting time} as a transformed busy period:
  the initial transformed work is $J$'s total delay due to old jobs,
  and each arriving new job's transformed size is the amount it delays~$J$.
  The main challenge is finding the distribution of initial transformed work,
  which requires using the Vacation Transformation
  (\fref{sub:vacation_transformation}).
  We analyze waiting time in \fref{sub:waiting_time}.
\item
  We would also like to think of \emph{residence time}
  as a transformed busy period.
  The initial work is simply $J$'s size~$x$,
  but now the arriving new jobs present a challenge:
  because $J$'s worst future rank may decrease with time,
  not all new jobs have the same transformed size distribution.
  We analyze residence time in \fref{sub:residence_time}.
\end{itemize}

\subsection{Waiting Time}
\label{sub:waiting_time}

Hereafter, ``transform'' means Laplace-Stieltjes transform.
We write the transform of a random variable~$V$ as $\tl{V}(s)$
and the transform of a parametrized random variable~$V[r]$
as $\tl{V}[r](s)$.

Recall that the waiting time of the rank-substituted tagged job~$J$
is the amount of time between $J$'s arrival and when $J$ first enters service.
As mentioned previously, we can think of waiting time
as a transformed busy period:
the initial work is $J$'s total delay due to old jobs,
and each arriving new job's size is the amount it delays~$J$.
This type of busy period is formalized in the following definition.

\begin{definition}
  Let $r$ be a rank.
  The \emph{new $r$-work busy period}, written $\Bnew{r}$,
  is the length of a busy period
  in an M/G/1 system with arrival rate $\lambda$
  and job size $\Xnew{r}$.
  Its transform satisfies \citep{book_harchol-balter}
  \begin{equation*}
    \tlBnew{r}{s} = \tlXnew{r}{s + \lambda(1 - \tlBnew{r}{s})}.
  \end{equation*}
  More generally, the new $r$-work busy period
  \emph{started by work~$W$\esub}, written $\Bnew[W]{r}$,
  is the length of a busy period in the same M/G/1 system
  with a random initial amount of work~$W$.
  It has transform \citep{book_harchol-balter}
  \begin{equation}
    \label{eq:busy_work}
    \tlBnew[W]{r}{s} = \tl{W}(s + \lambda(1 - \tlBnew{r}{s})).
  \end{equation}
\end{definition}

Let $r = \rworst[d, x]{0}$ be $J$'s worst future rank upon arrival.
Because $J$ is not served until its residence time,
$r$ remains $J$'s worst future rank for the entirety of $J$'s waiting time.
This means $J$'s delay due to each new job that arrives during its waiting time
is $\Xnew{r}$,
so $J$'s waiting time is a new $r$-work busy period
started by initial work~$W$,
where $W$ is $J$'s total delay due to old jobs.

All that remains is to determine~$W$,
for which it is convenient to define two new notations.
First, let
\begin{align*}
  \rhonew{r} &= \lambda\E{\Xnew{r}} \\
  \rhoold[i]{r} &= \lambda\E{\Xold[i]{r}} \\
  \rhoold[\Sigma]{r} &= \sum_{i = 0}^\infty \rhoold[i]{r}
\end{align*}
be the ``loads contributed by'' new $r$-work, $i$-old $r$-work,
and all old $r$-work, respectively.
Second, let $\Yold[i]{r}$ be
the \emph{equilibrium distribution} \citep{book_harchol-balter},
or length-biased sample, of $\Xold[i]{r}$.
It has transform
\begin{equation*}
  \tlYold[i]{r}{s} = \frac{1 - \tlXold[i]{r}{s}}{s\E{\Xold[i]{r}}}.
\end{equation*}
Note that $\Yold[i]{r}$ is also the equilibrium distribution of
the length of a type~$i$ vacation in the VT~System,
$V_i = (\Xold[i]{r} \mid \Xold[i]{r} > 0)$,
because samples of length~$0$ are never encountered in a length-biased sample.

\begin{lemma}
  \label{lem:waiting_time}
  Under any SOAP policy,
  the Laplace-Stieltjes transform of waiting time
  for a rank-substituted tagged job with descriptor~$d$ and size~$x$ is
  \begin{equation*}
    \tlTwait_{d, x}(s)
    = \frac{%
        1 - \rhoold[\Sigma]{r}
        + \sum_{i = 1}^\infty \rhoold[i]{r}\tlYold[i]{r}{\sigma}}{%
        1 - \rhoold[0]{r}\tlYold[0]{r}{\sigma}},
  \end{equation*}
  where $r = \rworst[d, x]{0}$ and $\sigma = s + \lambda(1 - \tlBnew{r}{s})$.
\end{lemma}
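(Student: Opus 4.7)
My plan is to express $\Twait_{d, x}$ as a new $r$-work busy period started by a random initial workload, identify that initial workload with the stationary queueing time in the VT~System via the Vacation Transformation, and compute the VT~System's queueing time via a level-crossing argument on its workload.

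First, because the rank-substituted tagged job~$J$ has not yet been served during its wait, its (worst future) rank stays constant at $r = \rworst[d, x]{0}$. By the Pessimism Principle, each new job that arrives while $J$ waits delays $J$ by exactly $\Xnew{r}$ work; hence if $W$ denotes $J$'s total delay due to old jobs, then $\Twait_{d, x}$ is distributed as $\Bnew[W]{r}$. Applying \eqref{eq:busy_work},
\begin{equation*}
  \tlTwait_{d, x}(s) = \tl{W}(\sigma), \qquad \sigma = s + \lambda(1 - \tlBnew{r}{s}),
\end{equation*}
so the task reduces to computing $\tl{W}$.

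Next, I invoke the Vacation Transformation: $W$ has the distribution of the stationary queueing time of the VT~System, an M/G/1/FCFS queue with arrival rate~$\lambda$, job size $\Xold[0]{r}$, and type-$i$ vacations ($i \geq 1$) of i.i.d.\ length $V_i = (\Xold[i]{r} \mid \Xold[i]{r} > 0)$ that start only at empty moments and occupy the server with stationary probability $\rhoold[i]{r}$. By PASTA, the queueing time transform coincides with the stationary workload transform $\tl{U}(s)$. To compute $\tl{U}(s)$ I would write a level-crossing (Tak\'acs) equation: the workload decreases at rate~$1$ whenever positive, while upward jumps arise from Poisson job arrivals of size $\Xold[0]{r}$ at rate~$\lambda$ and from type-$i$ vacation starts of size $V_i$ at the stationary rate $\gamma_i = \rhoold[i]{r}/\E{V_i}$. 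Laplace-transforming the resulting up/down balance and simplifying using $\Pr(U = 0) = 1 - \rhoold[\Sigma]{r}$ together with the identities $\lambda(1 - \tlXold[0]{r}{s})/s = \rhoold[0]{r}\tlYold[0]{r}{s}$ and $\gamma_i(1 - \tl{V_i}(s))/s = \rhoold[i]{r}\tlYold[i]{r}{s}$ (both of which follow directly from the definition of the equilibrium distribution) yields
\begin{equation*}
  \tl{U}(s) = \frac{(1 - \rhoold[\Sigma]{r}) + \sum_{i \geq 1} \rhoold[i]{r}\tlYold[i]{r}{s}}{1 - \rhoold[0]{r}\tlYold[0]{r}{s}}.
\end{equation*}
Setting $s = \sigma$ gives the claimed formula for $\tlTwait_{d, x}(s)$.

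The main obstacle is that vacations in the VT~System do \emph{not} arrive according to a Poisson process, so classical M/G/1-with-vacations results do not apply out of the box. The saving grace is that the level-crossing argument only requires the long-run vacation start rates $\gamma_i$, which are pinned down by $\rhoold[i]{r}$ and $\E{V_i}$ via elementary stationarity. Making this rigorous (likely via a regenerative argument on idle/busy cycles of the VT~System, together with ergodicity) shows that the detailed and intractable dependence of vacation timings on the untransformed state is immaterial for the stationary workload distribution, so the level-crossing derivation goes through unchanged.
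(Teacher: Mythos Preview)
Your reduction to $\tlTwait_{d,x}(s)=\tl{W}(\sigma)$ and your invocation of the Vacation Transformation match the paper exactly; the difference lies in how you compute the VT~System's queueing-time transform. The paper cites Fuhrmann's decomposition for the stationary \emph{number in system} of an M/G/1 queue with vacations, writing it as the independent sum of (i) the queue length $N_Q$ in a vacation-free M/G/1 and (ii) the queue length $N_V$ seen by an arrival conditional on a non-busy server, and then translates the product of their generating functions back to a queueing-time transform via the distributional form of Little's Law. You instead attack the \emph{workload} process directly with a level-crossing (Tak\'acs) balance, treating vacation starts as jumps from level zero at long-run rate $\gamma_i$. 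Your route is more self-contained: it avoids the detour through number-in-system and Little's Law, and it makes explicit precisely what is needed from the vacation process (only the stationary start rates and that vacations begin at zero workload with i.i.d.\ lengths). The paper's route, by contrast, offloads the ``non-Poisson vacations'' obstacle you flag onto the hypotheses of Fuhrmann's theorem, so no separate regenerative justification is needed in the text. Both arguments are sound; yours is more elementary, the paper's more modular.
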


\begin{proof}
  Call the rank-substituted tagged job~$J$.
  As previously mentioned,
  $\Twait_{d, x}$ is a new $r$-work busy period started by initial work~$W$,
  where $W$ is $J$'s total delay due to old jobs.
  This means $\tlTwait_{d, x}(s) = \tl{W}(\sigma)$ by~\fref{eq:busy_work},
  so it remains only to compute $\tl{W}(\sigma)$.

  The Vacation Transformation states that $W$ has the same distribution
  as the queueing time in the VT~System,
  which is a particular M/G/1/FCFS system with vacations.
  A decomposition result of \citet[Equation~(4)]{vacations_fuhrmann}
  states that the number of jobs in an M/G/1/FCFS system with vacations,
  such as the VT~System,
  is distributed as the sum of two independent random variables:
  \begin{itemize}
  \item
    the number~$N_Q$ of jobs in the queue
    of a vacation-free M/G/1/FCFS system; and
  \item
    the number~$N_V$ of jobs in the queue observed by an arriving job
    conditional on observing a non-busy server.
  \end{itemize}
  Recalling the Vacation Transformation
  and a standard result for the M/G/1 queue \citep{book_harchol-balter},
  we immediately obtain the probability generating function for~$N_Q$,
  \begin{equation}
    \label{eq:pgf_nq}
    \hat{N}_Q(z)
    = \frac{1 - \rhoold[0]{r}}{1 - \rhoold[0]{r}\tlYold[0]{r}{\lambda(1 - z)}}.
  \end{equation}

  In the VT~System, a job arriving to a non-busy server
  observes $N_V$ as one of the following.
  \begin{itemize}
  \item
    If the server is idle,
    which happens with probability $1 - \rhoold[\Sigma]{r}$,
    there are $0$ jobs in the queue.
  \item
    If the server is in the middle of a type~$i$ vacation,
    which happens with probability $\rhoold[i]{r}$,
    the number of jobs in the queue is
    the number of Poisson arrivals at rate~$\lambda$ during time~$\Yold[i]{r}$,
    which is the amount of time since the start of the type~$i$ vacation.
  \end{itemize}
  Accounting for the fact that
  we measure $N_V$ only when a job arrives to a non-busy server,
  which happens with probability $1 - \rhoold[0]{r}$,
  we obtain the probability generating function of $N_V$,
  \begin{equation}
    \label{eq:pgf_a}
    \hat{N}_V(z)
    = \frac{%
        1 - \rhoold[\Sigma]{r}
        + \sum_{i = 1}^\infty \rhoold[i]{r}\tlYold[i]{r}{\lambda(1-z)}}{%
        1 - \rhoold[0]{r}}.
  \end{equation}
  Multiplying \fref{eq:pgf_nq} and~\fref{eq:pgf_a},
  applying the distributional version of Little's Law\footnote{%
    Here we apply the distributional version of Little's Law to the VT~System,
    not the original SOAP system,
    which is crucial because the law applies only to FCFS systems.}
  \citep{distributional_little_keilson},
  and substituting $\sigma = \lambda(1 - z)$
  gives the transform of queueing time in the VT~System,
  which matches the desired transform $\tl{W}(\sigma)$.
\end{proof}

\subsection{Residence Time}
\label{sub:residence_time}

\begin{lemma}
  \label{lem:residence_time}
  Under any SOAP policy,
  the Laplace-Stieltjes transform of residence time
  for a rank-substituted tagged job with descriptor~$d$ and size~$x$ is
  \begin{equation*}
    \tlTres_{d, x}(s)
    = \exp\biggl(
        -\lambda \int_0^x (1 - \tlBnew{\rworst[d, x]{a}}{s}) \, da
      \biggr).
  \end{equation*}
\end{lemma}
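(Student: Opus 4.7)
The plan is to reduce the residence time of the rank-substituted $J$ to a sum of independent contributions from infinitesimal slices of $J$'s age and then obtain the transform as a product integral.

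First, after rank substitution (\fref{app:rank_substitution}), $J$'s effective rank at age $a$ during residence is $r(a) := \rworst[d, x]{a}$, which is non-increasing in~$a$. The Pessimism Principle then applies cleanly at every age: a new job arriving while $J$ is at age~$a$ delays $J$ by exactly $\Xnew{r(a)}$. Moreover, while $J$ is paused at age~$a$ because another new job is being served, further arrivals face the same threshold $r(a)$, so the total real time used up by a single arrival at $J$-age~$a$ together with all the cascading arrivals it triggers is a standard M/G/1 busy period of length $\Bnew{r(a)}$, throughout which $J$'s age stays at~$a$.

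Next I would partition $J$'s ageing from $0$ to $x$ into infinitesimal slices $[a, a+da]$. In each slice $J$ receives $da$ of its own service; during that $da$-window of real time, new arrivals form a Poisson process of rate~$\lambda$, so the number of arrivals in the slice is Poisson$(\lambda\,da)$, and each triggers an independent busy period of length $\Bnew{r(a)}$. The slice's total service-interruption is therefore a compound Poisson sum whose Laplace-Stieltjes transform is $\exp(-\lambda\, da\,(1 - \tlBnew{r(a)}{s}))$. Since disjoint slices see independent Poisson arrivals, the slice contributions are independent and their transforms multiply; passing to the continuous limit converts the product to the exponential of the integral and yields the stated formula.

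The main obstacle is the bookkeeping needed to avoid double-counting: arrivals that land inside a mini busy period must be absorbed into that busy period rather than treated as new slice-initiating arrivals. This is cleanly resolved by the observation that each mini busy period occurs with $J$'s age frozen, so every arrival belongs unambiguously to exactly one slice, namely the one containing $J$'s age at the moment the enclosing busy period started. The other delicate ingredient is the rank substitution itself, which must ensure that what we are computing really represents the residence-time distribution of the original $J$ and not of some auxiliary process; that justification is deferred to \fref{app:rank_substitution}.
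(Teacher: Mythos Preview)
Your proposal is correct and follows essentially the same approach as the paper: both slice $[0,x]$ into infinitesimal pieces, treat each slice's contribution as an independent mini busy period at threshold $\rworst[d,x]{a}$, multiply the resulting transforms using Poisson independence, and pass to the continuous limit. The only cosmetic difference is framing---the paper writes each slice directly as a ``busy period started by work~$\delta$'' and invokes~\fref{eq:busy_work}, whereas you unpack the same object as $da$ of $J$'s own service plus a compound-Poisson sum of $\Bnew{r(a)}$ interruptions.
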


\begin{proof}
  We view residence time as a \emph{sum of many small busy periods},
  each started by a small amount of work~$\delta$.
  We then take the $\delta \to 0$ limit,
  which exists thanks to conditions in \fref{app:rank_function}.
  This is very similar to the argument used to compute the transform of
  residence time under SRPT \citep{srpt_analysis_schrage}.
  Specifically, we divide $[0, x]$ into chunks of size~$\delta$
  and consider each small busy period started by the work needed
  to bring the job from age~$a$ to age $a + \delta$.
  In the $\delta \to 0$ limit,
  for the entirety of the small busy period starting at age~$a$,
  we can assume the job has rank $\rworst[d, x]{a}$.

  By the Pessimism Principle,
  the amount of work in the small busy period starting at age~$a$
  is the length of a new $\rworst[d, x]{a}$-work busy period
  started by work~$\delta$,
  which by~\fref{eq:busy_work} has transform
  \begin{equation*}
    \tlBnew[\delta]{\rworst[d, x]{a}}{s}
    = \exp(-\delta\lambda(1 - \tlBnew{\rworst[d, x]{a}}{s})).
  \end{equation*}
  The lengths of the small busy periods are independent
  because the arrival process is Poisson,
  so the residence time,
  which is the total amount of work in all such busy periods,
  has transform
  \begin{equation*}
    \prod_{i = 0}^{\mathclap{x/\delta}}
      \tlBnew[\delta]{\rworst[d, x]{\delta i}}{s}
    = \exp\biggl(
        -\delta\lambda \sum_{i = 0}^{\mathclap{x/\delta}}
          (1 - \tlBnew{\rworst[d, x]{\delta i}}{s})
      \biggr).
  \end{equation*}
  Taking the $\delta \to 0$ limit yields the desired expression.
\end{proof}

\subsection{Total Response Time}

\begin{theorem}[SOAP Transform of Response Time]
  \label{thm:response_time_transform}
  Under any SOAP policy,
  the Laplace-Stieltjes transform of response time
  of jobs with descriptor~$d$ and size~$x$ is
  \begin{equation*}
    \tl{T}_{d, x}(s) = \tlTwait_{d, x}(s)\tlTres_{d, x}(s),
  \end{equation*}
  where $\tlTres_{d, x}(s)$ and $\tlTwait_{d, x}(s)$ are as in
  Lemmas~\ref{lem:waiting_time} and~\ref{lem:residence_time}, respectively.
\end{theorem}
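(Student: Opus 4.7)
The plan is to combine Lemmas~\ref{lem:waiting_time} and~\ref{lem:residence_time} through the pathwise decomposition $T_{d, x} = \Twait_{d, x} + \Tres_{d, x}$, which holds by the definitions of waiting time and residence time; the theorem then reduces to proving that $\Twait_{d, x}$ and $\Tres_{d, x}$ are \emph{independent}, whereupon
\[
  \tl{T}_{d, x}(s)
  = \E{e^{-s(\Twait_{d, x} + \Tres_{d, x})}}
  = \tlTwait_{d, x}(s)\,\tlTres_{d, x}(s).
\]
Throughout, I would work with the rank-substituted tagged job of \fref{app:rank_substitution}, whose effective rank at age $a$ is the deterministic, nonincreasing quantity $\rworst[d, x]{a}$, and whose waiting and residence times have the same joint law as those of the original $J$.

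In the rank-substituted system I would next trace the data on which each summand measurably depends. The waiting time is a function of the state of the system at $J$'s arrival together with the marked Poisson arrival stream (arrivals bearing their i.i.d.\ descriptors) on the interval $(0, \Twait_{d, x}]$ after $J$'s arrival. The residence time, as constructed in the proof of Lemma~\ref{lem:residence_time}, is a sum of small busy periods triggered by the marked Poisson arrivals \emph{after} $\Twait_{d, x}$; the Pessimism Principle guarantees that no old job contributes service to $J$'s delay once $J$ first attains rank $\rworst[d, x]{0}$, so $\Tres_{d, x}$ depends on the pre-wait history only through the deterministic trajectory $a \mapsto \rworst[d, x]{a}$, which is fixed by the given $d$ and $x$.

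The central step is the independence argument. Since $\Twait_{d, x}$ is a stopping time for the natural filtration of the marked Poisson arrival process augmented by the state at $J$'s arrival, the strong Markov property implies that the marked arrivals in $(\Twait_{d, x}, \infty)$ form an independent marked Poisson process of rate~$\lambda$, independent of everything up to $\Twait_{d, x}$. Being a measurable function of this post-stopping process, $\Tres_{d, x}$ is therefore independent of $\Twait_{d, x}$, and the transforms factor as claimed. The delicate point is exactly the claim that $\Tres_{d, x}$ carries no residual dependence on old jobs—whose rank trajectories can in principle oscillate and recross $\rworst[d, x]{0}$ any number of times—and this is precisely what the Pessimism Principle, together with the recycled-service picture of the Vacation Transformation, is designed to dispose of: in the rank-substituted view every old-job service burst that could delay $J$, whether original or recycled, is already complete by the end of waiting time.
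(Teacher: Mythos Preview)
Your proposal is correct and follows the same approach as the paper: decompose $T_{d,x} = \Twait_{d,x} + \Tres_{d,x}$ and argue independence. The paper's own proof is a single sentence---``Because the arrival process is Poisson, $\Twait_{d,x}$ and $\Tres_{d,x}$ are independent''---so you have simply spelled out, via the strong Markov property at the stopping time $\Twait_{d,x}$ and the Pessimism Principle's guarantee that old-job delay is exhausted by the end of waiting time, the details that the paper leaves implicit.
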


\begin{proof}
  Because the arrival process is Poisson,
  $\Twait_{d, x}$ and $\Tres_{d, x}$ are independent,
  so the result follows
  from Lemmas~\ref{lem:waiting_time} and~\ref{lem:residence_time}
\end{proof}

\begin{theorem}[SOAP Mean Response Time]
  \label{thm:response_time_mean}
  Under any SOAP policy,
  the mean response time of jobs with descriptor~$d$ and size~$x$ is
  \begin{equation*}
    \E{T_{d, x}}
    = \frac{%
          \lambda \sum_{i = 0}^\infty \E{(\Xold[i]{r})^2}}{%
          2(1 - \rhoold[0]{r})(1 - \rhonew{r})}
      + \int_0^x \frac{1}{1 - \rhonew{r(a)}} \, da,
  \end{equation*}
  where $r(a) = \rworst[d, x]{a}$ and $r = \rworst[d, x]{0}$.
\end{theorem}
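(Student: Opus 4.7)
The plan is to use $\E{T_{d,x}} = \E{\Twait_{d,x}} + \E{\Tres_{d,x}}$, which follows from the independence of $\Twait_{d,x}$ and $\Tres_{d,x}$ that underlies \fref{thm:response_time_transform}, and compute each mean separately. The mean waiting time I would extract by differentiating the transform in \fref{lem:waiting_time} at $s = 0$; the mean residence time is cleaner to read off directly from the small-busy-period decomposition used in \fref{lem:residence_time}, working at the level of means rather than transforms.

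For $\E{\Twait_{d,x}}$, write $\tlTwait_{d,x}(s) = N(\sigma(s))/D(\sigma(s))$, where $\sigma(s) = s + \lambda(1-\tlBnew{r}{s})$ and $N, D$ are the numerator and denominator from \fref{lem:waiting_time}. The standard M/G/1 busy-period identity $\E{\Bnew{r}} = \E{\Xnew{r}}/(1-\rhonew{r})$ yields $\sigma'(0) = 1/(1-\rhonew{r})$. Since $\tlYold[i]{r}{0} = 1$, we have $N(0) = D(0) = 1 - \rhoold[0]{r}$, confirming $\tlTwait_{d,x}(0) = 1$. Using the equilibrium-distribution mean $\E{\Yold[i]{r}} = \E{(\Xold[i]{r})^2}/(2\E{\Xold[i]{r}})$, one obtains $N'(0) = -(\lambda/2)\sum_{i \geq 1}\E{(\Xold[i]{r})^2}$ and $D'(0) = \lambda\E{(\Xold[0]{r})^2}/2$. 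The quotient and chain rules then give $\E{\Twait_{d,x}} = \sigma'(0)(D'(0) - N'(0))/D(0)$, and the critical simplification is that $D'(0)$ supplies the $i=0$ term while $-N'(0)$ supplies the $i \geq 1$ terms, recombining into a single series $\sum_{i \geq 0}\E{(\Xold[i]{r})^2}$ to produce the first summand of the theorem.

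For $\E{\Tres_{d,x}}$, I would compute means directly in the decomposition used in the proof of \fref{lem:residence_time}. Each chunk starting at age $a$ is a new $r(a)$-work busy period started by deterministic work $\delta$; by the identity $\E{B_W} = \E{W}/(1-\rho)$, its mean length is $\delta/(1-\rhonew{r(a)})$. Summing over the chunks $a = \delta i$ for $0 \leq \delta i < x$ and taking $\delta \to 0$ yields $\E{\Tres_{d,x}} = \int_0^x 1/(1-\rhonew{r(a)}) \, da$, the second summand.

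The main technical work is the quotient-rule bookkeeping for waiting time, and in particular the algebraic coincidence that the $i=0$ contribution from $D'(0)$ and the $i \geq 1$ contributions from $-N'(0)$ merge into one clean series over all $i \geq 0$; without that merge the expression would look messier than what the theorem claims. Everything else is a routine application of standard moment identities, and summing the two pieces completes the proof.
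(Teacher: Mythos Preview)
Your proposal is correct and is precisely the ``straightforward computation'' the paper gestures at: the paper's own proof is the single line $\E{T_{d,x}} = -\tl{T}_{d,x}'(0)$, and you have carried out exactly that differentiation, split across the waiting and residence factors. Your choice to compute $\E{\Tres_{d,x}}$ directly at the level of means from the small-busy-period decomposition rather than via the transform is a cosmetic variation that lands on the same integral.
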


\begin{proof}
  This follows from $\E{T_{d, x}} = -\tl{T}_{d, x}'(0)$
  after straightforward computation.
\end{proof}

We can use $T_{d, x}$ to analyze $T_d$,
the response time of jobs with descriptor~$d$ and any size,
and $T$, the overall response time of all jobs.
Specifically,
$\tl{T}_d(s) = \E{\tl{T}_{d, X_d}(s)}$,
where $X_d$ is the size distribution of jobs with descriptor~$d$,
and $\tl{T}(s) = \E{\tl{T}_D(s)}$,
where $D$ is the random descriptor assigned to a new job.
Analyzing response time of jobs with size~$x$ and any descriptor is similar,
but it requires computing $D_x$,
the distribution of descriptors of jobs with size~$x$.

\section{New Analyses for Specific Policies}
\label{sec:applications}

In this section, we analyze the response time
of several policies discussed in \fref{sub:examples_new}.
The main challenge to analyzing SOAP policies
is determining $\Xnew{r}$ and $\Xold[i]{r}$.
Throughout, we give expressions for these
focusing only on ranks~$r$ that are important for the final result.
Specifically,
for the possible descriptors~$d$ and sizes~$x$,
we only need to find
\begin{itemize}
\item
  $\Xnew{\rworst[d, x]{a}}$ for all ages~$a$ and
\item
  $\Xold[i]{\rworst[d, x]{0}}$.
\end{itemize}
For simplicity, we give final formulas only for mean response time.

\subsection{Discretized~FB}
\label{sub:discretized_fb}

Consider the discretized~FB policy (\fref{ex:soap_checkpoints}),
which can only preempt jobs when their age is at
specific checkpoints spaced $1$~time unit apart.
We represent this using the rank function
\begin{equation*}
  r(\dnil, a) = \angle{\floor{a} - a, a},
\end{equation*}
shown in \fref{fig:rank_checkpoints}.
As in the analysis of traditional FB,
it is convenient to work in terms of the \emph{capped size distribution}
and its associated load \citep[Section~30.3]{book_harchol-balter},
\begin{equation}
  \label{eq:capped}
  \begin{aligned}
    X_{\ol{x}} &= \min\{X, x\} \\
    \rho_{\ol{x}} &= \lambda\E{X_{\ol{x}}}.
  \end{aligned}
\end{equation}

In discretized~FB, a job of size~$x$ and age~$a$ has worst future rank
\begin{equation*}
  \rworst[\dnil, x]{a} =
  \begin{cases}
    \angle{0, \floor{x}} & \text{if } a \leq \floor{x} \\
    \angle{\floor{a} - a, a} & \text{if } a > \floor{x}.
  \end{cases}
\end{equation*}
The new $r$-work and $0$-old $r$-work relevant for obtaining response time
are therefore
\begin{align*}
  \Xnew{\angle{0, \floor{x}}} &= X_{\olfloor{x}} \\
  \Xnew{\angle{\floor{a} - a, a}} &= 0 \\
  \Xold[0]{\angle{0, \floor{x}}} &= X_{\olfloorplus{x}{1}{2.2mu}},
\end{align*}
where $\floor{x} < a < x$ in the second equation.
From \fref{thm:response_time_mean}, we conclude the following.

\begin{proposition}
  \label{prop:discretized_fb}
  Under discretized~FB (\fref{ex:soap_checkpoints}),
  the mean response time of jobs with size~$x$ is\footnote{%
    See \fref{app:rworst} for discussion of why
    we use $\ceil{x}$ instead of $\floor{x} + 1$.}
  \begin{equation*}
    \E{T_x}
    = \frac{%
        \lambda\mathbf{E}\bigl[X_{\olceil{x}}^2\bigr]}{%
        2\bigl(1 - \rho_{\olceil{x}}\bigr)\bigl(1 - \rho_{\olfloor{x}}\bigr)}
      + \frac{\floor{x}}{1 - \rho_{\olfloor{x}}} + x - \floor{x}.
  \end{equation*}
\end{proposition}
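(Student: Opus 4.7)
The plan is to apply \fref{thm:response_time_mean} directly, using the ingredients identified in the paragraph preceding the proposition. The first step is to verify the worst future rank itself. Because $\angle{\floor{a} - a, a}$ has strictly negative primary component at every non-integer~$a$ and zero primary component at every integer, the supremum of $r(\dnil, b)$ over $b \in [0, x)$ is attained at the largest integer in the interval, giving $\rworst[\dnil, x]{0} = \angle{0, \floor{x}}$. For $a \in (\floor{x}, x)$ a similar case analysis shows $\rworst[\dnil, x]{a} = \angle{\floor{a} - a, a}$, matching the piecewise expression displayed just above the proposition.

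Next I would assemble the loads. A new arrival starts at rank $\angle{0, 0}$, and its rank first matches $\angle{0, \floor{x}}$ at age $\floor{x}$, so the cutoff for new $\angle{0, \floor{x}}$-work is $\floor{x}$ and $\rhonew{\angle{0, \floor{x}}} = \rho_{\olfloor{x}}$. When $\rworst$ has strictly negative primary component, any fresh arrival immediately outranks it, so $\Xnew = 0$ on $(\floor{x}, x)$. For old work, the first age at which the rank strictly exceeds $\angle{0, \floor{x}}$ is $\floor{x}+1$, yielding $\Xold[0]{\angle{0, \floor{x}}} = X_{\olfloorplus{x}{1}{2.2mu}}$. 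I would then invoke the conventions of \fref{app:rworst} to rewrite $\floor{x}+1$ as $\ceil{x}$ (the two agree whenever $x \notin \mathbb{Z}$) and to collapse all higher recycled contributions, so that $\sum_{i \geq 0} \E{(\Xold[i]{r})^2} = \E{X_{\olceil{x}}^2}$ and $\rhoold[0]{r} = \rho_{\olceil{x}}$.

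Plugging these into \fref{thm:response_time_mean} turns the first term into exactly $\lambda\E{X_{\olceil{x}}^2} / \bigl[2(1 - \rho_{\olceil{x}})(1 - \rho_{\olfloor{x}})\bigr]$. For the integral $\int_0^x 1/(1 - \rhonew{\rworst[\dnil, x]{a}}) \, da$, I would split at $a = \floor{x}$: on $[0, \floor{x}]$ the integrand is the constant $1/(1 - \rho_{\olfloor{x}})$, contributing $\floor{x}/(1 - \rho_{\olfloor{x}})$, and on $(\floor{x}, x)$ the integrand is simply $1$, contributing $x - \floor{x}$. Adding the pieces gives the claimed formula.

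The main obstacle is the bookkeeping at integer ages $\floor{x}+1, \floor{x}+2, \dots$. Under a naive reading of \fref{def:old_work}, the rank drops back below $\angle{0, \floor{x}}$ an instant after each such integer, which would suggest non-trivial $i$-old intervals for $i \geq 1$. The correct interpretation, to be imported from \fref{app:rworst}, is that these would-be recycled intervals are separated by discarded regions of measure zero and must be handled as degenerate; once this technicality is cleared and the cap $\floor{x}+1$ is replaced by $\ceil{x}$ to cover integer~$x$ uniformly, the remaining manipulations are routine.
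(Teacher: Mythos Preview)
Your proposal follows exactly the paper's route: compute the worst future rank piecewise, read off $\Xnew{r}$ and $\Xold[0]{r}$, substitute into \fref{thm:response_time_mean}, and split the residence-time integral at $a=\floor{x}$. The paper's own argument is terser still---it simply lists the three ingredients and says ``from \fref{thm:response_time_mean}, we conclude''---so your explicit treatment of the integral split and of the $\ceil{x}$-versus-$\floor{x}+1$ convention is welcome elaboration rather than a different method.

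One remark on the obstacle you raise at the end. You correctly observe that, read literally, \fref{def:old_work} produces nonempty recycled intervals $(\floor{x}+k,\floor{x}+k+1)$ for every $k\geq 1$, since the rank is strictly below $\angle{0,\floor{x}}$ on each of them and exceeds it only at the isolated integer endpoints. Your proposed fix---importing a degeneracy convention from \fref{app:rworst}---is misdirected: that appendix handles the case where the supremum defining $\rworst[d,x]{a}$ is not attained (which is exactly why $\ceil{x}$ replaces $\floor{x}+1$ when $x\in\mathbb{Z}$), not the case of measure-zero discarded gaps between recycled intervals. The paper itself simply lists $\Xold[0]{r}$ and is silent about $\Xold[i]{r}$ for $i\geq 1$, so you are no less rigorous than the source; but be aware that \fref{app:rworst} is not the place where this particular bookkeeping is settled.
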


\subsection{Mixture of Known and Unknown Job Sizes}
\label{sub:humans_and_robots}

Consider the ``humans and robots'' system (\fref{ex:soap_humans_and_robots}).
This scenario has two features that were previously difficult to analyze.
\begin{itemize}
\item
  Some jobs have known size, namely robots,
  while others have unknown size, namely humans.
\item
  Some jobs are preemptible, namely robots,
  while others are nonpreemptible, namely humans.
\end{itemize}
Let
\begin{itemize}
\item
  $X_H$ and $X_R$ be the respective size distributions of humans and robots,
\item
  $p_H$ and $p_R = 1 - p_H$ be the respective
  probabilities that a given arrival is a human or a robot, and
\item
  $\lambda_H = \lambda p_H$ and $\lambda_R = \lambda p_R$ be the respective
  arrival rates of humans and robots.
\end{itemize}
Recall that humans all have descriptor $[H, ?]$,
indicating that their size is unknown,
and robots each have a descriptor of the form $[R, x]$,
indicating that their exact size is~$x$.
The rank function is
\begin{align*}
  r([H, ?], a) &= \angle{-a, x_H} \\
  r([R, x], a) &= \angle{0, x - a},
\end{align*}
where $x_H$ is a constant.

Both humans and robots have maximal primary rank, namely~$0$,
upon entering the system,
so the only nonzero new $r$-work and $i$-old $r$-work
occur for ranks of the form $r = \angle{0, x}$:
\begin{align*}
  \Xnew{\angle{0, x}} &=
  \begin{cases}
    X_H \1(x_H < x) & \text{with probability } p_H \\
    X_R \1(X_R < x) & \text{with probability } p_R
  \end{cases} \\
  \Xold[0]{\angle{0, x}} &=
  \begin{cases}
    X_H \1(x_H \leq x) & \text{with probability } p_H \\
    X_R \1(X_R \leq x) & \text{with probability } p_R
  \end{cases} \\
  \Xold[1]{\angle{0, x}} &=
  \begin{cases}
    X_H \1(x_H > x) & \text{with probability } p_H \\
    x \1(X_R > x) & \text{with probability } p_R,
  \end{cases}
\end{align*}
where $\1$ is the indicator function.
These follow from arguments similar to those given for SRPT
in Examples~\ref{ex:new_work_srpt} and~\ref{ex:old_work_srpt}.
Finally, $\Xold[i]{r} = 0$ for $i \geq 2$,
so \fref{thm:response_time_mean} yields the following.

\begin{proposition}
  In the humans and robots system (\fref{ex:soap_humans_and_robots}),
  the mean response time of humans is
  \begin{equation*}
    \E{T_H}
    = \frac{%
        \lambda_H\E{X_H^2} + \lambda_R\E{(X_R)_{\ol{x_H}}^2}}{%
        2(1 - \rho_H - \rho_{R \leq x_H})(1 - \rho_{R < x_H})}
      + \E{X_H},
  \end{equation*}
  and the mean response time of robots with size~$x$ is
  \iftoggle{widecol}{\begin{equation*}}{\begin{align*}}
    \E{T_{R, x}}
    \iftoggle{widecol}{}{&}= \frac{%
        \lambda_H\E{X_H^2} + \lambda_R\E{(X_R)_{\ol{x}}^2}}{%
        2(1 - \rho_H\1(x_H \leq x) - \rho_{R \leq x})(1 - \rho_{R < x_H})}
      \iftoggle{widecol}{}{\\ &\quad}
      + \int_0^x
          \frac{1}{1 - \rho_H\1(x_H \leq t) - \rho_{R < t}}
        \, dt,
  \iftoggle{widecol}{\end{equation*}}{\end{align*}}
  where $(X_R)_{\ol{x_H}}\esub$ and $(X_R)_{\ol{x}}\esub$
  are capped distributions as defined in~\fref{eq:capped} and
  \begin{align*}
    \rho_H &= \lambda_H \E{X_H} \\
    \rho_{R < x} &= \lambda_R \E{X_R \1(X_R < x)} \\
    \rho_{R \leq x} &= \lambda_R \E{X_R \1(X_R \leq x)}.
  \end{align*}
\end{proposition}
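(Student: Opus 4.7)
The plan is to apply \fref{thm:response_time_mean} directly, using the $\Xnew{r}$ and $\Xold[i]{r}$ expressions the excerpt already derived for this system. No new structural arguments are needed; the work is in identifying the relevant worst future ranks for humans and for robots and then bookkeeping the resulting substitutions.

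First I would compute $r = \rworst[d, x]{0}$ and the function $r(a) = \rworst[d, x]{a}$ for each job type. For a human, $r([H,?], a) = \angle{-a, x_H}$ is strictly decreasing in $a$ in the lexicographic order, so its supremum over $a \in [0, X_H)$ is attained at $a = 0$ and equals $\angle{0, x_H}$; crucially this does not depend on $X_H$, so every human has the same worst future rank. Moreover, for $a > 0$ we have $r(a) = \angle{-a, x_H}$ with strictly negative primary rank, and every new arrival (human or robot) starts at primary rank~$0$, hence immediately satisfies $r(d,0) \succeq r(a)$, giving cutoff age $0$ and $\Xnew{r(a)} = 0$. For a robot with descriptor $[R, x]$, $r([R, x], a) = \angle{0, x-a}$ is decreasing, so $r(a) = \angle{0, x - a}$ and in particular $r = \angle{0, x}$.

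Next I would substitute into \fref{thm:response_time_mean}. The waiting-time numerator is $\lambda \sum_i \E{(\Xold[i]{r})^2}$, and only $i = 0, 1$ contribute by the closed-form cases for $\Xold[0]{\angle{0, x}}$ and $\Xold[1]{\angle{0, x}}$ supplied in the excerpt. For humans I plug in $x = x_H$; the two terms combine using
\begin{equation*}
  \E{X_R^2 \1(X_R \leq x_H)} + x_H^2 \P{X_R > x_H} = \E{(X_R)_{\ol{x_H}}^2},
\end{equation*}
and similarly for robots with the rank parameter replaced by $x$. The loads $\rhoold[0]{r}$ and $\rhonew{r}$ in the denominator are read off the same case analysis, writing $p_H \lambda = \lambda_H$ and $p_R \lambda = \lambda_R$ and recognizing $\lambda_R \E{X_R \1(X_R \bowtie y)} = \rho_{R \bowtie y}$ for $\bowtie \in \{<, \leq\}$.

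Finally I would handle the residence integral $\int_0^x (1 - \rhonew{r(a)})^{-1}\,da$. For humans, $\rhonew{r(a)} = 0$ for almost every $a$ by the observation above, so the integral collapses to $X_H$; taking an expectation over $X_H$ gives the additive $\E{X_H}$ term. For robots, $\rhonew{r(a)} = \rhonew{\angle{0, x-a}}$, and substituting $t = x - a$ converts the integral into the stated form involving $\rho_H\1(x_H \leq t) + \rho_{R < t}$. The main obstacle, such as it is, is just the case analysis: being careful with the strict versus non-strict inequalities that arise from FCFS tiebreaking (the excerpt's footnote to \fref{app:rworst} indicates this is where $\succ$ versus $\succeq$ matters), and keeping the roles of $x$, $x_H$, and the integration variable $t$ straight as the rank parameter changes between the waiting-time and residence-time computations.
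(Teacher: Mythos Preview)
Your proposal is correct and follows essentially the same approach as the paper: the paper's entire proof consists of the displayed $\Xnew{\angle{0,x}}$, $\Xold[0]{\angle{0,x}}$, $\Xold[1]{\angle{0,x}}$ formulas together with the remark that $\Xold[i]{r}=0$ for $i\ge 2$, after which it simply says ``\fref{thm:response_time_mean} yields the following.'' Your write-up expands the bookkeeping the paper leaves implicit---identifying $\rworst$ for each job type, noting that a human's negative primary rank for $a>0$ forces $\rhonew{r(a)}=0$ in the residence integral, combining the $i=0,1$ second-moment terms into $\E{(X_R)_{\ol{x}}^2}$, and making the change of variables $t=x-a$---but none of this departs from the paper's method.
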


\subsection{The Gittins Index Policy}

As discussed in \fref{ex:soap_gittins},
the Gittins index policy can have a nonmonotonic rank function,
and thus only special cases of it have been analyzed in the past
\citep{multiclass_ayesta, rs_slowdown_hyytia}.
Theorems~\ref{thm:response_time_transform} and~\ref{thm:response_time_mean}
give us exactly the framework we need to analyze
the Gittins index policy
used with \emph{any} set of descriptors and job size distributions,
though $\Xnew{r}$ and $\Xold[i]{r}$ do not have a general closed form
and thus require details of the system to derive.
We start in this section by using the SOAP framework to analyze
the model considered by \citet{multiclass_ayesta},
which is relatively simple thanks to the fact that
the rank functions involved are monotonic.
In \fref{sub:gittins_serpt},
we move to a more difficult setting
in which the Gittins index policy has a nonmonotonic rank function.

We consider a system with two job classes $A$ and~$B$,
which serve as our descriptors,
with respective arrival rates
$\lambda_A = \lambda p_A$ and $\lambda_B = \lambda p_B$
and respective Pareto size distributions $X_A$ and~$X_B$.
Specifically,
\begin{equation*}
  \P{X_A > t} = \biggl(1 + \frac{t}{\beta_A}\biggr)^{-\alpha_A}
\end{equation*}
and symmetrically for~$B$,
where $\alpha_A, \alpha_B > 1$ and $\beta_A, \beta_B > 0$
are parameters of the distributions.
The rank function is \citep{m/g/1_gittins_aalto, multiclass_ayesta}
\begin{equation*}
  r(A, a) = \frac{1}{G(A, a)} = \frac{\beta_A + a}{\alpha_A}
\end{equation*}
and symmetrically for~$B$.
The rank is strictly increasing in~$a$,
so $\rworst[d, x]{a} = r(d, x)$ for all ages $a < x$.
Strictly increasing rank also means jobs are never recycled,
so $\Xold[i]{r} = 0$ for all $i \geq 1$.

It remains only to compute $\Xnew{r}$ and $\Xold[0]{r}$.
As in \fref{def:new_work}, let
\begin{equation*}
  c_A[r] = \max\{\alpha_A r - \beta_A, 0\}
\end{equation*}
be the age at which a class~$A$ job first surpasses rank~$r$,
and symmetrically for~$B$.
Note that $c_A[r(A, x)] = x$.
Because the rank function is strictly increasing,
\begin{equation*}
  \Xnew{r} = \Xold[0]{r} =
  \begin{cases}
    (X_A)_{\ol{c_A[r]}} & \text{with probability } p_A \\
    (X_B)_{\ol{c_B[r]}} & \text{with probability } p_B,
  \end{cases}
\end{equation*}
where $(X_A)_{\ol{c_A[r]}}$ and $(X_B)_{\ol{c_B[r]}}$
are capped distributions as defined in~\fref{eq:capped}.
By \fref{thm:response_time_mean},
\begin{equation*}
  \E{T_{A,x}}
  = \frac{%
      \lambda_A\E{(X_A)_{\ol{x}}^2}
      + \lambda_B\E{(X_B)_{\ol{y}}^2}}{%
      2(1 - \rho_{\ol{x}, \ol{y}})^2}
    + \frac{x}{1 - \rho_{\ol{x}, \ol{y}}},
\end{equation*}
where
\begin{align*}
  y &= c_B[r(A, x)] \\
  \rho_{\ol{x}, \ol{y}}
  &= \lambda_A\E{(X_A)_{\ol{x}}}
    + \lambda_B\E{(X_B)_{\ol{y}}}.
\end{align*}
Of course, $\E{T_{B, x}}$ is symmetrical.
It is simple to verify that, modulo notation,
this matches the results of \citet{multiclass_ayesta}.

\subsection{Case Study: SERPT vs. the Gittins Index}
\label{sub:gittins_serpt}

We now analyze both SERPT and the Gittins index policy
on the size distribution introduced in \fref{ex:soap_serpt}.
Both policies have nonmonotonic rank functions in this case,
so we need the full power of \fref{thm:response_time_mean}
to compute mean response times.
All jobs have descriptor~$\dnil$
and the same two-point size distribution.
We write the size distribution as $X = \2\{2, 14\}$,
meaning jobs have size~$2$ with probability~$1/2$ and size~$14$ otherwise.

We begin by analyzing SERPT.
We have computed $\Xnew{r}$ for SERPT with this size distribution
in \fref{ex:new_work_serpt},
\begin{equation*}
  \Xnew{r} =
  \begin{cases}
    0 & r \leq 8 \\
    2 & r > 8
  \end{cases}
\end{equation*}
We only need to compute $i$-old $r$-work
for ranks $\rworst[\dnil, 2]{0} = 8$
and $\rworst[\dnil, 14]{0} = 12$.
From the rank function plot in \fref{fig:old_work_serpt}, we see
\begin{align*}
  \Xold[0]{8} &= 2 \\
  \Xold[0]{12} &= \2\{2, 14\} \\
  \Xold[1]{8} &= \2\{0, 8\} \\
  \Xold[1]{12} &= 0.
\end{align*}
The most subtle of these is $\Xold[1]{8}$:
the total time an old job~$I$ spends as recycled with respect to rank~$8$
is either~$0$, if $I$ has size~$2$,
or~$8$, if $I$ has size~$14$.
Finally, $\Xold[i]{r} = 0$ for $i \geq 2$.
Applying \fref{thm:response_time_mean} yields size-specific mean response times
\begin{align*}
  \E{T_2^{\mathrm{SERPT}}}
  &= \frac{18\lambda}{1 - 2\lambda} + 2 \\
  \E{T_{14}^{\mathrm{SERPT}}}
  &= \frac{50\lambda}{(1 - 8\lambda)(1 - 2\lambda)}
    + \frac{6}{1 - 2\lambda} + 8.
\end{align*}

\begin{figure}
  \centering
  \begin{tikzpicture}[rank plot]
  \original{(0, 8)}{(2, 8)}
  \discarded{(2, 8)}{(6, 8)}
  \recycled[1]{(6, 8)}{(14, 8)}

  \xguide{2}{12}
  \xguide{6}{8}
  \xguide{14}{0}
  \yguide{2}{6}
  \yguide{0}{8}
  \yguide{2}{12}

  \axes{14.8}{14}{$a$}{$r(\emptyset, a)$}

  \draw[cutoff] (0, 8) -- (14, 8);

  \draw[primary] (0, 8) -- (2, 6);
  \draw[primary] (2, 12) -- (14, 0);

  \jump{2}{6}{12}
\end{tikzpicture}

  \captiondetail
  The rank function for SERPT where jobs have
  two-point size distribution $\2\{2,14\}$,
  meaning size~$2$ with probability $1/2$ and size~$14$ otherwise.
  The same rank function appears in \fref{fig:rank_serpt}.
  The $0$-old (original) and $1$-old ($1$-recycled) intervals
  are highlighted in green.
  A job is original with respect to rank~$8$ until age~$2$,
  when its rank jumps up if it does not complete.
  Upon reaching age~$6$, the job has remaining size~$8$,
  so it becomes recycled.

  \caption{Original and Recycled Work in SERPT}
  \label{fig:old_work_serpt}
\end{figure}

The Gittins index policy for the same system
has rank function $r(\dnil, a) = 1/G(\dnil, a)$,
where, by the definition in \fref{ex:soap_gittins},
\begin{equation*}
  \frac{1}{G(\dnil, a)} =
  \begin{cases}
    4 - 2a & \text{if } a < 2 \\
    14 - a & \text{if } a \geq 2.
  \end{cases}
\end{equation*}
This rank function is illustrated in \fref{fig:rank_gittins}.
Broadly speaking, the Gittins index policy
places higher priority on jobs of age $a < 2$ than SERPT does.
Like that of SERPT, the rank function is piecewise linear with negative slopes,
so we omit the very similar analysis and simply state
the size-specific mean response times:
\begin{align*}
  \E{T_2^{\mathrm{Gittins}}}
  &= \frac{6\lambda}{1 - 2\lambda} + 2 \\
  \E{T_{14}^{\mathrm{Gittins}}}
  &= \frac{50\lambda}{(1 - 8\lambda)(1 - 2\lambda)}
    + \frac{10}{1 - 2\lambda} + 4.
\end{align*}
As expected due to its prioritization of jobs of age $a < 2$,
the Gittins index policy has shorter mean response time for jobs of size~$2$
but longer mean response time for jobs of size~$14$.
The Gittins index policy is known to minimize overall mean response time,
and it performs as promised:
\begin{align*}
  \E{T_2^{\mathrm{SERPT}}} - \E{T_2^{\mathrm{Gittins}}}
  &= \frac{12\lambda}{1 - 2\lambda} \\
  \E{T_{14}^{\mathrm{SERPT}}} - \E{T_{14}^{\mathrm{Gittins}}}
  &= \frac{-8\lambda}{1 - 2\lambda}.
\end{align*}
Because the two job sizes are equally likely,
the Gittins index policy has lower overall mean response time than SERPT.

\section{Conclusion}

We introduce \emph{SOAP policies},
a very broad class of scheduling policies for the M/G/1 queue.
The characteristic feature of a SOAP policy is its \emph{rank function},
which maps each possible state a job could be in
to a \emph{rank}, meaning priority level.
The SOAP class includes many policies old and new.
While the mean response times
of some relatively simple SOAP policies have been analyzed previously,
the vast majority of SOAP policies,
in particular those with \emph{nonmonotonic} rank functions,
have resisted analysis.
Using two key technical insights,
the \emph{Pessimism Principle} and the \emph{Vacation Transformation},
we overcome the obstacles presented by nonmonotonic rank functions
to present a \emph{universal response time analysis}
that applies to any SOAP policy.

Our universal analysis applies to some notable policies.
Among these is the \emph{Gittins index policy},
which has long been known to minimize mean response time
in settings where exact job sizes are not known.
While prior work \citep{multiclass_ayesta, rs_slowdown_hyytia}
was restricted to the case of known job sizes
or distributions with the decreasing hazard rate property,
our analysis can handle the Gittins index policy
with \emph{arbitrary size distributions},
which was previously intractable.
Our universal analysis also applies to several
\emph{practically motivated systems},
such as those in which jobs are only preemptible at certain checkpoints
or only some jobs' exact sizes are known.
More broadly, we are optimistic that techniques similar to
our Pessimism Principle and Vacation Transformation
could help analyze the response times of scheduling policies
in more complex M/G/1 settings,
such as systems with setup times or server vacations.

\begin{acks}
  We thank Peter van de Ven and the anonymous referees
  for their helpful comments.
  Ziv Scully was supported by an
  \grantsponsor{arcs}{ARCS Foundation}{https://www.arcsfoundation.org}
  scholarship and the
  \grantsponsor{nsf}{National Science Foundation}{https://www.nsf.gov}
  Graduate Research Fellowship Program under
  Grant No.~\grantnum{nsf}{DGE-1745016}.
  Mor Harchol-Balter was supported by
  NSF-\grantnum{nsf}{XPS-1629444},
  NSF-\grantnum{nsf}{CMMI-1538204},
  NSF-\grantnum{nsf}{CMMI-1334194},
  and a Faculty Award from
  \grantsponsor{g}{Google}{https://research.google.com/research-outreach.html}.
\end{acks}

\bibliographystyle{ACM-Reference-Format}
\bibliography{refs}

\appendix

\section{Extension to LCFS Tiebreaking}
\label{app:lcfs}

SOAP policies that use LCFS tiebreaking admit almost exactly the same analysis
as those that use FCFS tiebreaking.
As explained in detail below,
the entire analysis is unchanged except for
\emph{reversing the strictness of rank comparisons}
in Definitions~\ref{def:new_work} and~\ref{def:old_work},
meaning swapping $\prec$ and $\preceq$ and swapping $\succ$ and $\succeq$.

Throughout Sections~\ref{sub:pessimism_principle}
and~\ref{sub:vacation_transformation},
which follow a tagged job~$J$ through the system,
we distinguish between \emph{new} jobs, which arrive after~$J$,
and \emph{old} jobs, which arrive before~$J$.
When there are multiple jobs of minimal rank,
FCFS tiebreaking prioritizes old jobs, then $J$, and then new jobs.
This prioritization affects the \emph{strictness of rank comparisons}
when defining new $r$-work and old $r$-work.
For instance, \fref{def:new_work} defines
\begin{equation*}
  c_d[r] = \inf\{a \geq 0 \mid r(d, a) \succeq r\},
\end{equation*}
whereas \fref{def:old_work} defines
\begin{equation*}
  c_{0, d}[r] = \inf\{a \geq 0 \mid r(d, a) \succ r\},
\end{equation*}
which is the same but with $\succ$ in place of~$\succeq$.
When $J$'s worst future rank is~$r$,
the above values each represent a ``cutoff age''
before which a job of descriptor~$d$ outranks~$J$.
Under FCFS tiebreaking,
a new job~$K$ outranks $J$ until $K$'s rank is \emph{at least}~$r$,
whereas an old job~$I$ outranks $J$
until $I$'s rank \emph{strictly exceeds}~$r$.
Under LCFS tiebreaking, this situation is reversed,
which manifests as reversing the strictness of rank comparisons.

\section{Rank Function Details}
\label{app:rank_function}

In order to ensure that a SOAP policy is well-defined,
its rank function~$r$ must satisfy the following conditions.
\begin{itemize}
\item
  With respect to descriptor,
  $r$ must be \emph{piecewise continuous}
  to ensure that certain expectations are well-defined.
\item
  With respect to age,
  $r$ must be \emph{piecewise monotonic} and \emph{piecewise differentiable}
  to determine when and how to share the processor between multiple jobs.
  Any compact region of $\R_{\geq 0}$ must contain
  only finitely many boundary points between pieces.
  Furthermore, upwards jump discontinuities
  must be continuous from the right\footnote{%
    That is, if a job jumps from low rank to high rank at age~$a$,
    its rank exactly at age~$a$ should be the high rank.}.
\end{itemize}
These conditions allow us to define a SOAP policy
as the limit of discrete-time priority policies,
with the limit taken as the discretization increment approaches~$0$.
When $\mc{R} = \R^2$ ordered lexicographically,
the limiting policy is~\fref{alg:soap},
which has a clear generalization to $\mc{R} = \R^n$.
In \fref{alg:soap},
we say a job is ``in state $(d, a)$'' to mean
it has descriptor~$d$ and age~$a$.

\begin{algorithm}
  \caption{SOAP Policy in Continuous Time}
  \label{alg:soap}
  \justifying\noindent
  Let $\mc{J}$ be the set of jobs in states $(d, a)$ of minimal $r_1(d, a)$.
  \begin{itemize}
  \item
    Within~$\mc{J}$, consider jobs such that
    $r$ is strictly decreasing in age.
    If there are any, schedule the job of minimal~$r_2$,
    using FCFS tiebreaking if there are multiple such jobs.
  \item
    Otherwise, within~$\mc{J}$, consider jobs such that
    $r_1$ is constant and $r_2$ is strictly increasing in age.
    If there are any, share the processor between all such jobs,
    giving a job in state $(d, a)$ share proportional to
    $1 / \partial_ar_2(d, a)$.
  \item
    Otherwise, $\mc{J}$ must only contain jobs such that
    $r_1$ is strictly increasing.
    Share the processor between jobs in~$\mc{J}$,
    giving a job in state $(d, a)$ share proportional to
    $1 / \partial_ar_1(d, a)$.
  \end{itemize}
\end{algorithm}

\section{Worst Future Rank Details}
\label{app:rworst}

For simplicity of exposition,
throughout \fref{sec:key_ideas},
we assumed that a job's worst future rank $\rworst[d, x]{a}$
was actually attained by that job in the future.
However, there are two cases where the supremum in \fref{def:rworst}
is \emph{not} be attained by some age~$a$:
when there is a jump discontinuity
or when the maximum is at the open boundary $a = x$.
For example, in \fref{sub:discretized_fb},
if a job has integer size~$x$,
then the job never attains rank $\angle{0, x}$.

There are multiple ways to remedy the situation.
The most intuitive is to say that $\rworst[d, x]{a}$
is not a rank but a \emph{rank bound}.
The set of rank bounds is $\mc{R} \times \{-1, 0\}$
ordered lexicographically.
An ordinary rank~$r$ corresponds to the pair $(r, 0)$
representing the \emph{closed} upper bound $r' \preceq r$
over other ranks~$r'$,
but the pair $(r, -1)$ is ``just below'' $(r, 0)$,
representing the \emph{open} upper bound $r' \prec r$.

The corrections to definitions are as follows.
In \fref{def:rworst},
we define the worst future rank to be the rank bound
\begin{equation*}
  \rworst[d, x]{a}
  = (\sup_{\mathclap{a \leq b < x}} r(d, b),
    -\1(\text{the supremum is not attained}))
\end{equation*}
instead of just a rank.
In Definitions~\ref{def:new_work} and~\ref{def:old_work},
instead of defining $r$-work for a rank~$r$,
we define $(r, q)$-work for rank bounds $(r, q)$.
When we compare a rank $r'$ against a rank bound $(r, q)$,
we compare rank bound $(r', 0)$ against $(r, q)$.
Concretely, in \fref{def:new_work}, we define
\begin{equation*}
  c_d[(r, q)] = \inf\{a \geq 0 \mid (r(d, a), 0) \succeq (r, q)\},
\end{equation*}
and similarly for $b_{i, d}[(r, q)]$ and $c_{i, d}[(r, q)]$
in \fref{def:old_work}.

For example,
consider the analysis of discretized~FB in \fref{sub:discretized_fb}.
When a job has integer size~$x$,
the supremum in $\rworst[\dnil, x]{a}$ is attained only in the $b \to x$ limit.
Thus, $\Xold[0]{\rworst[\emptyset,x]{0}} = X_{\ol{x}}$ when $x$ is an integer,
not $X_{\ol{x + 1}}$ as would follow from the uncorrected \fref{def:old_work}.
To correct for this,
\fref{prop:discretized_fb} uses $\ceil{x}$ instead of $\floor{x} + 1$.

The above discussion assumes FCFS tiebreaking.
As discussed in \fref{app:lcfs},
the strictness of rank comparisons in
Definitions~\ref{def:new_work} and~\ref{def:old_work}
is reversed under LCFS tiebreaking,
but the same changes described above apply without issue.

\section{The Rank-Substituted Tagged Job}
\label{app:rank_substitution}

In this section, we show that for the purposes of analyzing $T_{d, x}$,
we can use a \emph{rank-substituted} tagged job.

\begin{namedobservation*}{Worst Future Rank Substitution}
  Consider an arbitrary arrival sequence
  that includes the arrival of a tagged job~$J$
  with descriptor~$d$ and size~$x$.
  The response time of~$J$ is unaffected if we
  \emph{schedule $J$ as if its rank were $\rworst[d, x]{a}$
    instead of $r(d, a)\esub$}
  at every age~$a$,
  without otherwise changing the arrival sequence or scheduling policy.
  We call this process \emph{rank substitution}.
\end{namedobservation*}

Worst Future Rank Substitution is
a direct consequence of the Pessimism Principle
(\fref{sub:pessimism_principle}).
To see why it holds,
consider two systems experiencing identical job arrivals,
including tagged job~$J$.
\begin{itemize}
\item
  \emph{System~A} is unmodified,
  so $J$ is scheduled according to its current rank.
\item
  \emph{System~B} uses rank substitution,
  so $J$ is scheduled according to its worst future rank.
\end{itemize}
We say the two systems \emph{synchronize} at time~$t$
if they contain the same jobs at the same ages at~$t$.
The systems clearly synchronize at $J$'s arrival time.
We show below that the systems also synchronize at many other points in time,
one of which is $J$'s exit time,
so $J$'s response time is the same in each system.

The Pessimism Principle states that
all of $J$'s delay due to another job~$L$
occurs \emph{before $J$ is served while at its worst future rank}.
This suggests we should focus on $J$'s \emph{worst future age},
which when $J$ has age~$a$ is
\begin{equation*}
  \aworst[d, x]{a} = \inf\{b \geq a \mid r(d, b) = \rworst[d, x]{b}\},
\end{equation*}
namely the earliest age at which $J$ attains its worst future rank\footnote{%
  As discussed in \fref{app:rworst},
  a job's worst future rank is sometimes only attained in a limit
  due to the job's completion or a jump in the rank function.
  Accounting for this changes only minor details in the following discussion.}.
See \fref{fig:aworst} for an illustration.
Note that $a = \aworst[d, x]{a}$ if and only if $r(d, a) = \rworst[d, x]{a}$.

\begin{figure}
  \centering
  \begin{tikzpicture}[rank plot]
  \begin{scope}[shift={(0, -45)}, y=10]
    \xguide[$v_1$]{5.5}{45*0.4 + 12*0.4}
    \xguide[$w_1$]{6.25}{45*0.4 + 8*0.4}
    \xguide[$v_2$]{8}{45*0.4 + 8*0.4}
    \xguide[$w_2$]{8.75}{45*0.4 + 5*0.4}
    \xguide[$x$]{10.25}{45*0.4 + 5*0.4}

    \yguide[$v_1$]{0}{5.5}
    \yguide[$w_1$]{6.25}{6.25}
    \yguide[$v_2$]{6.25}{8}
    \yguide[$w_2$]{8.75}{8.75}
    \yguide[$x$]{8.75}{10.25}

    \axes{14.8}{14}{$a$}{$\aworst[d, x]{a}$}

    \draw[aworst] (0, 5.5) -- (5.5, 5.5) -- (6.25, 6.25);
    \draw[aworst] (6.25, 8) -- (8, 8) -- (8.75, 8.75);
    \draw[aworst] (8.75, 10.25) -- (10.25, 10.25);

    \jump[aworst]{6.25}{8}{6.25}
    \jump[aworst]{8.75}{10.25}{8.75}
    \point[aworst, fill=white]{(10.25, 10.25)}
  \end{scope}

  \begin{scope}
    \axes{14.8}{14}{$a$}{$r(d, a)$}

    \Jsnake

    \draw[rworst]
    (0, 12) -- (5.5, 12)
    -- \snakefirst{(5.5, 12)}{(7, 4)}
    -- \snakefirst{(8, 8)}{(9.5, 2)}
    -- (10.25, 5);
    \point[rworst point]{(10.25, 5)}
    \node[above] at (3.5, 12) {$\rworst[d, x]{a}$};
  \end{scope}
\end{tikzpicture}

  \captiondetail
  The relationship between rank $r(d, a)$ (solid cyan),
  worst future rank $\rworst[d, x]{a}$ (dashed magenta),
  and worst future age $\aworst[d, x]{a}$ (solid green)
  for a job with descriptor~$d$ and size~$x$.
  Age and worst future age coincide for ages in the intervals
  $[v_1, w_1]$ and $[v_2, w_2]$.

  \caption{Illustration of Worst Future Age}
  \label{fig:aworst}
\end{figure}

As $J$ ages, its worst future age alternates between being
a constant future age and its current age.
Let $[v_i, w_i]$ be the $i$th interval of ages~$a$ such that
$a = \aworst[d, x]{a}$.
It is convenient to set $w_0 = 0$ and $v_{n + 1} = x$,
where $n$ is the number of $[v_i, w_i]$ intervals.
We will show that System~A and System~B synchronize when,
for some~$i$, either
\begin{itemize}
\item
  $J$ has age~$v_i$ and is in service or
\item
  $J$'s age is in $(v_i, w_i]$.
\end{itemize}
It is clear that if the systems synchronize
when $J$ is served at age~$v_i$,
then the systems remain synchronized until $J$ reaches age~$w_i$,
because $J$'s rank in the two systems is identical until $J$ reaches age~$w_i$.
Thus, it suffices to show that
if the systems synchronize when $J$ has age~$w_{i - 1}$,
then the systems synchronize when $J$ is served at age~$v_i$.

Consider how the two systems change during the interval between
their synchronization when $J$ has age~$w_{i - 1}$.
Let $t_{\mathrm{A}}$ be the moment when $J$ is served at age~$v_i$ in System~A,
and symmetrically for $t_{\mathrm{B}}$ in System~B.
\begin{itemize}
\item
  In System~A, by the Pessimism Principle,
  each other job~$L$ in the system during the interval is served until
  it either completes or surpasses $J$'s worst future rank $r(d, v_i)$,
  after which $L$ is never served again.
  This is because $\aworst[d, x]{a} = v_i$ for all $a \in (w_{i - 1}, v_i]$.
\item
  In System~B, by the Pessimism Principle,
  each other job~$L$ is served for the same amount of time as in System~A,
  because rank substitution does not change $J$'s worst future rank.
\end{itemize}
Thus, if System~A experiences the same arrivals before $t_{\mathrm{A}}$
that System~B experiences before~$t_{\mathrm{B}}$,
then the systems synchronize at $t_{\mathrm{A}} = t_{\mathrm{B}}$, as desired.
Suppose for contradiction that one system, say System~A,
experiences an extra arrival.
Because the arrival sequence is the same for the two systems,
this occurs only if $t_{\mathrm{A}} > t_{\mathrm{B}}$.
But by work conservation and the observations above, at~$t_{\mathrm{B}}$,
System~A must serve $J$ at age~$v_i$,
so $t_{\mathrm{A}} = t_{\mathrm{B}}$ after all,
contradicting $t_{\mathrm{A}} > t_{\mathrm{B}}$.

\end{document}